\documentclass[aps,pra,twocolumn,amsmath,amssymb,nofootinbib,showpacs,superscriptaddress]{revtex4-1}
\usepackage[english]{babel}
\usepackage{latexsym}
\usepackage{graphics}
\usepackage{graphicx}
\usepackage{epsfig}
\usepackage{color}
\usepackage{bm}
\usepackage{amsmath}
\usepackage{amssymb}
\usepackage{amsthm}
\usepackage{dcolumn}
\usepackage{bm}
\usepackage{float}
\usepackage{hyperref}
\usepackage{color}
\usepackage{epstopdf}
\usepackage{cleveref}
\usepackage[svgnames]{xcolor}
\hypersetup{hidelinks,colorlinks=true,allcolors=DarkBlue}

\DeclareMathOperator*{\argmin}{arg\,min}

\begin{document}

\newcommand{\ket}[1]{|#1\rangle}
\newcommand{\bra}[1]{\langle #1|}
\newcommand{\bS}{\text{$\mathbf{S}$}}
\newcommand{\braket}[2]{\langle #1|#2\rangle}

\newcommand{\inp}{{\rm in}}
\newcommand{\out}{{\rm out}}

\newcommand{\tr}{{\rm Tr}}
\newcommand{\init}{{\textrm{init}}}

\newcommand{\kett}[1]{|#1\rangle\rangle}
\newcommand{\braa}[1]{\langle\langle #1|}
\newcommand{\com}[1]{{\color{blue}#1}}

\newtheorem{theorem}{Theorem}
\newtheorem{corollary}{Corollary}
\newtheorem{claim}{Claim}
\newtheorem{Example}{Example}

\preprint{APS/123-QED}

\title{Probability representation of quantum dynamics using pseudostochastic maps}
	
	\author{E.O. Kiktenko}
	\affiliation{Russian Quantum Center, Skolkovo, Moscow 143025, Russia}
	\affiliation{Department of Mathematical Methods for Quantum Technologies, Steklov Mathematical Institute of Russian Academy of Sciences, Moscow 119991, Russia}
	\affiliation{Moscow Institute of Physics and Technology, Dolgoprudny 141700, Russia}

	\author{A.O. Malyshev}
	\affiliation{Russian Quantum Center, Skolkovo, Moscow 143025, Russia}
	\affiliation{Moscow Institute of Physics and Technology, Dolgoprudny 141700, Russia}
		
	\author{A.S. Mastiukova}
	\affiliation{Russian Quantum Center, Skolkovo, Moscow 143025, Russia}
	\affiliation{Moscow Institute of Physics and Technology, Dolgoprudny 141700, Russia}
	
	\author{V.I. Man'ko}
	\affiliation{Russian Quantum Center, Skolkovo, Moscow 143025, Russia}
	\affiliation{Moscow Institute of Physics and Technology, Dolgoprudny 141700, Russia}
	\affiliation{P.N. Lebedev Physical Institute, Russian Academy of Sciences, Moscow 119991, Russia}

	\author{A.K. Fedorov}
	\affiliation{Russian Quantum Center, Skolkovo, Moscow 143025, Russia}
	\affiliation{Moscow Institute of Physics and Technology, Dolgoprudny 141700, Russia}

	\author{D. Chru{\'s}ci{\'n}ski}
	\affiliation{Institute of Physics, Faculty of Physics, Astronomy and Informatics, Nicolaus Copernicus University, Toru{\'n} 87100, Poland}
	
	\date{\today}
	\begin{abstract}
		In this work, we consider a probability representation of quantum dynamics for finite-dimensional quantum systems with the use of pseudostochastic maps acting on true probability distributions.
		These probability distributions are obtained via symmetric informationally complete positive operator-valued measure (SIC-POVM) and can be directly accessible in an experiment.
		We provide SIC-POVM probability representations both for unitary evolution of the density matrix governed by the von Neumann equation and dissipative evolution governed by Markovian master equation. 
		In particular, we discuss whereas the quantum dynamics can be simulated via classical random processes in terms of the conditions for the master equation generator in the SIC-POVM probability representation.
		We construct practical measures of nonclassicality non-Markovianity of quantum processes and apply them for studying experimental realization of quantum circuits realized with the IBM cloud quantum processor.
	\end{abstract}
\maketitle
	
\section{Introduction}

Quantum technologies require an efficient toolbox for synthesis, control, and characterization of the quantum states and processes~\cite{NielsenChuang}.
The task of characterizing quantum states has quite a rich history of attempts to describe quantum systems with the use of standard methods of statistical physics,
such as phase-space probability distributions~\cite{Wigner1932,Husimi1940,Glauber1963,Sudarshan1963}.
A quantum analog of classical phase-space probability distributions, known as the Wigner function, cannot be fully interpreted as a probability distribution because it takes negative values in some cases~\cite{Wigner1932,Wigner1984,Ferry2018}.
The negativity of the Wigner quasiprobability distribution plays an important role in the modern quantum theory since this effect is a signature of the highly nonclassical character of a quantum state.
In particular, it was demonstrated that negativity and contextuality are equivalent notions of nonclassicality~\cite{Spekkens2008}.
The negativity of the Wigner function has been largely studied for quantum information processing both for systems
with continuous~\cite{Wigner1984} and discrete variables~\cite{Wootters1987,Wootters2004,Gross2006,Ferrie2009,Ferrie2011,Zhu2016}.
Recent progress in quantum information science has helped us to understand the role of properties of quasiprobability distribution
in the context of verifying quantum resources that provide quantum speedup~\cite{Galvao2006,Cormick2006,Gottesman2012,Gottesman2014,Howard2014,Raussendorf2015,Pashayan2015}.

\begin{figure}
	\includegraphics[width=\linewidth]{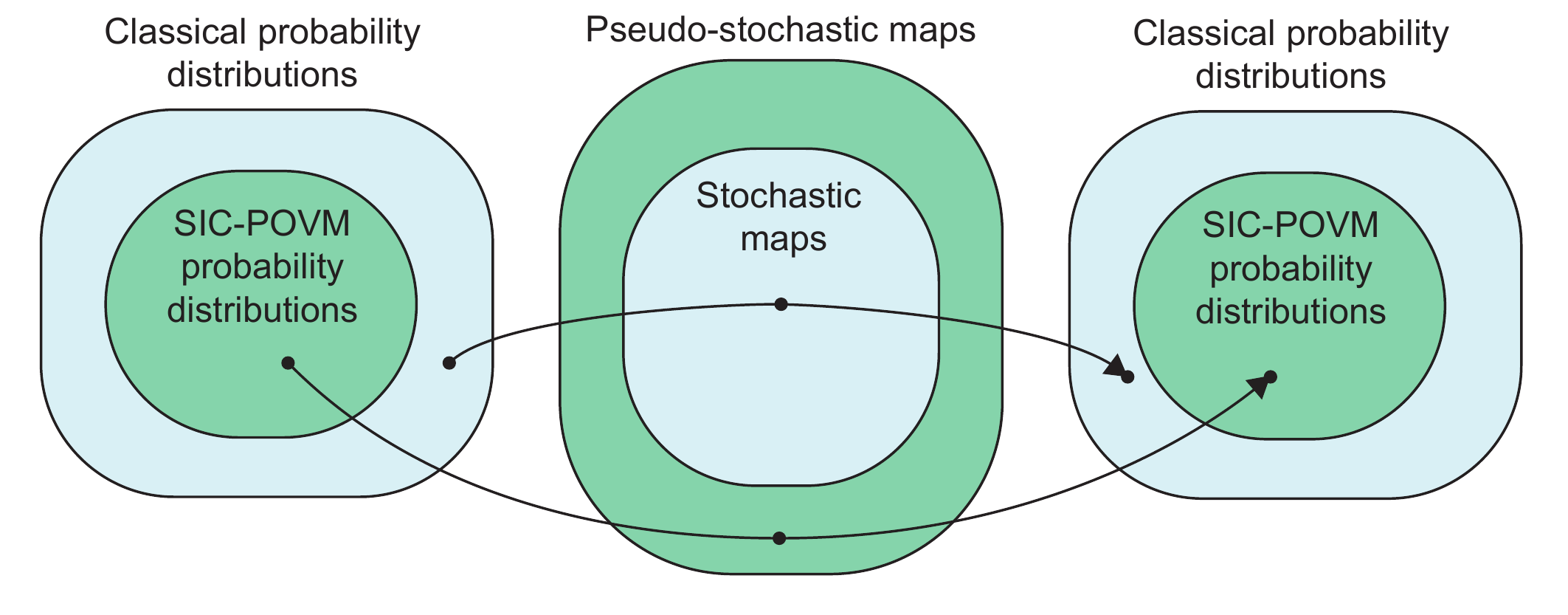}
	\caption{
	General relation among the set of SIC-POVM probability distributions, general classical probability distributions of the same dimension, and maps which keep distributions in a particular set.
	The set of SIC-POVM probability distributions, so-called Hilbert \emph{qplex}~\cite{Appleby2017}, is smaller than the full probability distribution set,
	while the set of possible maps turning SIC-POVM probabilities into SIC-POVM probabilities (known to be a pseudostochastic) is wider than the set of classical stochastic maps.
	Points and arrows demonstrate actions of pseudostochastic (stochastic) maps on SIC-POVM (classical) probability distributions.}
	\label{fig:intuition}
\end{figure}

An approach to the description of quantum phenomena using the language of probability distributions~\cite{Chernega2017,Chernega2018, Avanesov2019,Avanesov2019b} 
has been extended by the concept of informationally complete POVMs (IC-POVMs) and SIC-POVMs
that are based on measurements completely describing quantum states~\cite{Caves2002,Caves2004,Filippov2010}.
In this framework, quantum states are associated with probabilities related to a specific set of vectors in the Hilbert space, which is composed of so-called SIC projectors.
It is important to point out that in the SIC approach, the probability distributions describing quantum states contain no redundant information; i.e., the number of probabilities is the minimum possible for reconstructing all density-matrix elements.
We note that analytic proofs of SIC existence have only been found in a number of cases~\cite{Fuchs2017}.
The approach of describing quantum states with SIC-POVM probability distributions has been widely explored in quantum Bayesianism (QBism) reformulation of quantum mechanics~\cite{Appleby2011,Fuchs2013}.
SIC-POVM measurements also has been actively used in various experiments~\cite{Rehacek2004,Durt2008,Medendorp2011,Bent2015,Zhao2015,Hou2018}.
Importantly, it turns out that the set of possible probability distributions obtained with SIC-POVM measurements is smaller than the full probability distributions set of the same dimension.
This ``quantum'' part of a classical probability simplex, which is achievable via SIC-POVM measurements, is referred to as a {``qplex''}~\cite{Appleby2017}.
An important result of Ref.~\cite{Appleby2017} is the derivation of the properties of qplexes from the very fundamental assumptions about quantum theory as well as
the description of a link between the symmetry properties of qplexes and a condition for the existence of a $d$-dimensional SIC-POVM.

The representation of quantum states with the use of quasiprobability distributions can be further generalized 
to the representation of quantum processes (channels) with the use of quasistochastic matrices~\cite{Chruscinski2013,Chruscinski2015,Zhu2016b,Wetering2017}.
In contrast to traditionally used stochastic matrices, which describe the evolution of classical probability distributions, quasi-stochastic matrices can posses negative elements.
In Ref.~\cite{Wetering2017}, a functorial embedding of the quantum channels category into the category of quasi-stochastic matrices has been provided.
Thus, the formalism of quasi-stochastic matrices can serve as an alternative formulation of the quantum theory and looks promising in the framework of quantum resources analysis.
This apparatus, however, has not been consistently applied to quantum information processing tasks yet.

In this work, we focus on the dynamics of probability distributions obtained in SIC-POVM measurements.
In line with Refs.~\cite{Chruscinski2013,Chruscinski2015}, we refer to the resulting matrices (maps), which define the evolution of SIC-POVM probability distributions as \emph{pseudostochastic} rather than quasistochastic.
This is because these matrices correspond to the evolution of true experimentally accessible probabilities rather then quasi-probabilities.
As was mentioned, pseudostochastic matrices are an analog of conditional probability matrices without restrictions of positivity for matrix elements (see Fig.~\ref{fig:intuition}).
In fact, pseudostochastic matrices appear in the discussion of unitary evolution of quantum states described by qplexes (see Appendix B of Ref.~\cite{Appleby2017}) and are discussed in detail in Ref.~\cite{Wetering2017}.
However, a connection between pseudo(quasi)stochastic matrices and common equations used for describing quantum system dynamics, to the best of our knowledge, is not considered in the literature.
It is then important to investigate how pseudostochastic matrices appear as solutions of quantum dynamical equation, and we address this question in this work.
More precisely, we derive a dynamics equation for a SIC-POVM probability vector, which corresponds to the von Neumann equation and dissipative evolution governed by Markovian master equation.
Next we consider a general scheme of quantum mechanical experiments and demonstrate that the developed SIC-POVM representation shines additional light on nonclassical features of quantum process.
We consider necessary and sufficient conditions for the time-independent Markovian generator to produce classical-like quantum evolution.
These conditions allow us to construct practical measure for nonclassicality of quantum processes.
We show that the considered SIC-POVM probability representation also allows introducing new practical measure of non-Markovianity, which has been actively studied in recent decades~\cite{Wolf2008,Breuer2009,Rivas2010,Luo2012,Breuer2012,Hall2014,Chruscinski2014,Bylicka2014,Chruscinski2015b,Torre2015,Pineda2016,Li2019}.
Finally, we apply our theoretical results to experimental study of processes realized on the super-conducting quantum cloud IBM QX4 quantum processor~\cite{IBM}.

Our work is organized as follows.
In Sec.~\ref{sec:general}, we present a general scheme for the SIC-POVM probability representation of states and measurements.
In Sec.~\ref{sec:dynamics}, we derive an equation for a SIC-POVM probability vector which corresponds to the von Neumann equation and the Markovian master equation.
In Sec.~\ref{sec:noncl}, we study a relation between pseudostochasticity and non-classicality, and construct practical measures of nonclassicality and non-Markovianity of quantum processes.
In Sec.~\ref{sec:experiment}, we apply our theoretical result to experimental study of the IBM Q4 superconducting quantum processor.
We summarize main results and conclude in Sec.~\ref{sec:conclusion}.

\section{Probability representation of states, measurements and linear maps}\label{sec:general}

We start our consideration by introducing SIC-POVM effects, which can be used in the construction of probability representation of states and measurements for finite-dimensional systems.
Consider a $d$-dimensional Hilbert space $\mathcal{H}$ with $d\geq2$.
In what follows, we assume that it is possible to find out a set of $d^2$ normalized states $\{\ket{\psi_i}\}_{i=1}^{d^2}$ belonging to $\mathcal{H}$ such that
\begin{equation} \label{eq:scal_prod}
	|\bra{\psi_i}\psi_j\rangle|^2 = \tr(\Pi_i\Pi_j)=\frac{d\delta_{i,j}+1}{d+1},
\end{equation}
where $\Pi_i:=\ket{\psi_i}\bra{\psi_i}$ and $\delta_{i,j}$ stands for the Kronecker symbol.
We note that the set $\{\Pi_i\}_{i=1}^{d^2}$ forms a basis in the space $\mathcal{L}(\mathcal{H})$ of linear operators acting on $\mathcal{H}$.

The set $\{\Pi_i/d\}_{i=1}^{d^2}$ is called a SIC-POVM.
By its definition, we have
$\frac{1}{d}\Pi_i \geq 0$ and $\sum_i\Pi_i = \mathbf{1}d$,
where  $\mathbf{1}$ denotes identity operator.
At this moment, SIC-POVMs are found for $d = 2$--151, 168, 172, 195, 199, 228, 259, 323, and 844.
The obtained solutions are available online~\cite{QubismSite} (for a review, see also Ref.~\cite{Fuchs2017} and reference therein).
We also note that numerical methods are heavily involved for SIC-POVM search~\cite{Scott2010,Scott2017}.

\subsection{Representation of states}

Let us consider a quantum state given by a unit-trace semipositive Hermitian density operator $\rho\in\mathcal{L}(\mathcal{H})$ ($\rho\geq0$, $\tr\rho=1$).
The probability of obtaining an $i$th outcome corresponding to the effect $\Pi_i/d$ after SIC-POVM measurement is given by
\begin{equation} \label{eq:SICPOVMprobs}
	p_i = \frac{1}{d}\mathrm{Tr}(\rho\Pi_i).
\end{equation}
Let us write these probabilities in the form of vector $p :=
	\begin{bmatrix}
	p_1 & \ldots & p_{d^2}
	\end{bmatrix}^{\rm T}$,
which we further refer to as a SIC-POVM probability vector.

The density matrix $\rho$ can be reconstructed back from the SIC-POVM probability vector in the following way:
\begin{multline}\label{eq:rho}
	\rho = \sum_{i=1}^{d^2} \left[(d+1)p_i-\frac{1}{d}\right]\Pi_i\\=\sum_{i=1}^{d^2} \left[(d+1)\Pi_i-\mathbf{1}\right]p_i = \sum_i K_i p_i,
\end{multline}
where $K_i= (d+1)\Pi_i-\mathbf{1}$.

We mention the following useful property of $K_i$:
\begin{equation} \label{eq:property_of_K}
	\tr(K_i\Pi_i)=(d+1)\tr(\Pi_i\Pi_j)-\tr(\Pi_j)=d\delta_{i,j}.
\end{equation}

It is also useful to introduce a vectorized representation of linear operators.
Let $\{\ket{i}\}$ be an orthonormal basis in $\mathcal{H}$ and $A\in\mathcal{L}(\mathcal{H})$ is some linear operator.
We can write
$A=\sum_{i,j}A_{i,j}\ket{i}\bra{j}$,
where $A_{i,j}=\bra{i}A\ket{j}$ are matrix elements of $A$ in the $\{\ket{i}\}$ representation.
Next, we refer to
$\kett{A} := \sum_{i,j}A_{i,j}\ket{i}\otimes\ket{j}\in\mathcal{H}\otimes\mathcal{H}$,
as the ``ket'' vector representation of $A$.
In a similar way, we can introduce an adjoint vector $\braa{A} := \sum_{i,j} A_{i,j}^*\bra{i}\otimes\bra{j}\in\mathcal{H}^*\otimes\mathcal{H}^*$.
It is easy to check that for any two operators $A,B\in\mathcal{L}(\mathcal{{H}})$, their Hilbert-Schmidt product takes the form
$\tr(A^\dagger B)=\braa{A}\kett{B}$.

Let $A,B,U,V\in\mathcal{L}(\mathcal{H})$.
It is also easy to check that the identity $B=UAV^\dagger$ corresponds to the following identities in the vector representation:
\begin{equation} \label{eq:product}
	\kett{B}=U\otimes V^*\kett{A}, \quad \braa{B}=\braa{A} U^\dagger\otimes V^{\rm T}.
\end{equation}

Using the introduced vectorized representation, one can rewrite Eq.~\eqref{eq:rho} in the following form:
\begin{equation} \label{eq:statetoprob}
	\kett{\rho}=\sum_{i=1}^{d^2}\kett{K_i}p_i={\bf K}p,
\end{equation}
where
\begin{multline} \label{eq:Kdef}
	{\bf K} = \begin{bmatrix}
	\kett{K_1} & \ldots & \ket{K_d^2}
	\end{bmatrix}
	\\=
	(d+1)\begin{bmatrix}
	\kett{\Pi_1} & \ldots & \kett{\Pi_d^2}
	\end{bmatrix}-
	\begin{bmatrix}
	\kett{\bf 1} & \ldots & \kett{\bf 1}
	\end{bmatrix}
\end{multline}
is a $d^2\times d^2$ matrix.

Thus, the linear transformation ${\bf K}$ defines the map from the SIC-POVM probability vector $p$ to the vectorized representation of the density matrix $\rho$.
Using Eq.~\eqref{eq:SICPOVMprobs}, we obtain the inverse matrix ${\bf K}^{-1}$ given by
\begin{equation}
	{\bf K}^{-1} = \frac{1}{d}
	\begin{bmatrix}
	\braa{\Pi_1} & \ldots & \braa{\Pi_{d^2}}
	\end{bmatrix}^{\rm T}.
\end{equation}

We also note the following correspondence between the Hilbert-Schmidt product of states and dot product of SIC-POVM probability vectors.
Let $\rho$ and $\sigma$ be two arbitrary density matrices, and $p$ and $s$ be their corresponding SIC-POVM probability vectors.
One can check that
\begin{equation}
	\tr(\rho\sigma)=\braa{\rho}\kett{\sigma}=d(d+1)\langle p,s\rangle-1,
\end{equation}
where $\langle p,s\rangle = \sum_{i=1}^{d^2}p_is_i$ is a standard dot product of two SIC-POVM probability vectors.
Since $\tr(\rho\sigma)\in[0,1]$, we also have
\begin{equation}
	\frac{1}{d(d+1)} \leq \langle p,s\rangle \leq \frac{2}{d(d+1)},
\end{equation}
where the minimum is achieved for two orthogonal states $\rho$ and $\sigma$, while the maximum is achieved for $s=p$ and a pure state $\rho=\sigma=\ket{\psi}\bra{\psi}$.

As was already mentioned in the introduction, the set of possible SIC-POVM probability vectors is smaller than the full set of all possible $d^2$-dimensional probability vectors.
This fact can be easily verified by noticing that the maximum probability in the SIC-POVM probability vector can not exceed the value $d^{-1}$ due to the structure of SIC-POVM effects.
We refer readers to Ref.~\cite{Appleby2017}, where properties of the SIC-POVM probability vectors set (qplex) are studied in detail.

In Appendix~\ref{app:MUB}, we provide a relation between the considered SIC-POVM probability representation and an alternative probability representation based on mutually unbiased measurements (MUB) in the case of $d=2$.

\subsection{Representation of measurements}

Let us now consider a question how the SIC-POVM probability vectors determine the probabilities for arbitrary measurements.
Consider a POVM $E=\{E_1,\ldots,E_m\}$ with $E_i\geq 0$ and $\sum_i E_i={\bf 1}$.
Let us introduce an $m$-dimensional vector
$q:=\begin{bmatrix} q_1 & \ldots & q_m \end{bmatrix}^{\rm T}$
with elements given by probabilities of obtaining different outcomes in measuring $E$ for some state $\rho$: $q_i=\tr(E_i\rho)$.

The relation between the SIC-POVM probability vector $p$ (corresponding to $\rho$) and a new probability vector $q$ for an arbitrary measurement can be obtained using Eq.~\eqref{eq:rho}:
\begin{equation} \label{eq:arbitrmeas}
	q = {\bf M} p,
\end{equation}
where
\begin{equation} \label{eq:Mdef}
	{\bf M}= (d+1){\bf m}-
	\begin{bmatrix}
	\tr E_1 \\ \vdots \\ \tr E_m
\end{bmatrix}
\underbrace{
	\begin{pmatrix}
	1 & \ldots & 1
	\end{pmatrix}}_{d^2~{\rm elements}}
\end{equation}
is $m \times d^2$ matrix with elements given by ${\bf m}_{i,j}=\tr(E_i\Pi_j)$.
The matrix ${\bf m}$ is a stochastic rectangular matrix, that is, it satisfies the following properties:
	${\bf m}_{i,j} \geq 0$, $\sum_{i}{\bf m}_{i,j}=\tr\left(\sum_iE_i\Pi_j\right)=1$.
We note that ${\bf m}$ appears to be bistochastic; that is, its rows also sum up to 1, in the case where $m=d^2$ and $\tr E_i=d^{-1}$
for any $i\in\{1,\ldots, d^2\}$.
It may be the case when $\{E_i\}$ is also a SIC-POVM.

It easy to check from Eq.~\eqref{eq:Mdef} that the matrix ${\bf M}$ is pseudostochastic, that is, the sum of its elements in each column equals to unity: $\sum_{i}{\bf M}_{i,j}=(d+1)-{\rm Tr {\bf 1}}=1$;
however, some elements ${\bf M}_{i,j}$ may be negative.
It may be the case when $E_i$ is proportional to the projector on the state orthogonal to some $\ket{\psi_j}$.
Then we obtain ${\bf M}_{i,j}=-\tr E_j<0$.

\subsection{Representation of linear maps as pseudostochastic matrices}

Here we consider a representation of positive trace-preserving (PTP) and completely positive trace-preserving (CPTP) maps acting on quantum states in the SIC-POVM framework.

First, consider a PTP linear map $\Phi: \mathcal{L}(\mathcal{H}_\inp) \rightarrow \mathcal{L}(\mathcal{H}_\out)$,
where $\mathcal{H}_\inp$ and $\mathcal{H}_\out$ are $d_\inp$- and $d_\out$-dimensional Hilbert spaces, respectively.

Being a PTP map $\Phi$ transforms density operators in $\mathcal{H}_\inp$ into density operators in $\mathcal{H}_\out$, 
so we can consider an action of $\Phi$ on some input state $\rho^{\rm in}$ resulting in output state $\rho^\out=\Phi[\rho^\inp]$.
Let $\{\Pi_i^{\rm in}\}_{i=1}^{d_{\rm in}^2}$ and $\{\Pi_i^{\rm out}\}_{i=1}^{d_{\rm out}^2}$ be SIC-POVM projectors in $\mathcal{H}_\inp$ and $\mathcal{H}_\out$, respectively.
Let $p^\inp$ and $p^\out$ be SIC-POVM probability vectors corresponding to $\rho^\inp$ and $\rho^\out$. 
Simple algebra leads to 
$p^\out = {\bf S}p^\inp$,
where
\begin{eqnarray}
	&{\bf S}_{i,j} &= (d_\inp +1){\bf s}_{i,j} - \frac{1}{d_\out} {\rm Tr}\left[ \Pi_i^{\rm out} \Phi(\mathbf{1}_\inp) \right] , \label{eq:S}\\
	&{\bf s}_{i,j} &= \frac{1}{d_\out}{\rm Tr}\left[ \Pi^\out_i \Phi(\Pi^\inp_j) \right],\label{eq:s}
\end{eqnarray}
and $\mathbf{1}_\inp$ is the identity operator acting in $\mathcal{H}_\inp$.
One can see that ${\bf s}$ is a stochastic matrix: $\sum_i {\bf s}_{i,j}=1$, ${\bf s}_{i,j}\geq 0$, while ${\bf S}_{i,j}$ is pseudostochastic one: $\sum_i{\bf S}_{i,j}=1$ with some elements may be negative.

If $\Phi$ is also unital, that is, $\Phi(\mathbf{1}_\inp) = \mathbf{1}_\out$, 
then ${\bf s}$ is bistochastic and Eq.~(\ref{eq:S}) reduces to  ${\bf S}_{i,j}= (d_\inp +1){\bf s}_{i,j} - {d_\out}^{-1}$.
The examples of PTP maps for the qubit case ($d=2$) are presented in Appendix~\ref{app:PTP}.

Let now $\Phi$ be a CPTP map that is representing a quantum channel. 
Any such map gives rise to a Kraus representation
\begin{equation}\label{eq:krausopact}
	\rho^\out=\Phi[\rho^\inp] = \sum_k A_k\rho^\inp A_k^\dagger, \quad \sum_k A_k^\dagger A_k=\mathbf{1}_{\rm in} .
\end{equation}
In this case Eq.~(\ref{eq:S}) may be rewritten as follows:
\begin{equation}
	{\bf S}_{i,j} = (d_\inp +1){\bf s}_{i,j}-\frac{1}{d_\out} \tr\left[\Pi^\out_i \sum_k A_k A_k^\dagger\right]
\end{equation}
where
${\bf s}_{i,j} = d_\out^{-1}\sum_k\tr\left[A_k\Pi_i^\inp A_k^\dagger\Pi^\out_j\right]$.
Again, ${\bf s}$ is a stochastic matrix, while ${\bf S}$ is pseudostochastic.

Let us observe that we can define ${\bf S}$ in another way.
Let us rewrite Eq.~\eqref{eq:krausopact} in the vectorized form:
\begin{equation} \label{eq:Krauss}
	\kett{\rho_{\rm out}}=\sum_k A_k\otimes A_k^*\kett{\rho_{\rm in}}={\bf A}\kett{\rho_{\rm in}},
\end{equation}
where ${\bf A}=\sum_k A_k\otimes A_k^*$.
Consequently, we obtain ${\bf K}_\out p^\out = {\bf A} {\bf K}_\inp p^\inp$,
where ${\bf K}_\out$ and ${\bf K}_\inp$ are defined via Eq.~\eqref{eq:Kdef}, which is related to the corresponding SIC-POVM elements.
Finally, we have the following expression for the pseudostochastic matrix {\bf S}:
${\bf S} = {\bf K}_\out^{-1} {\bf A} {\bf K}_\inp$.

We can also define $\Phi$ using the Choi state
\begin{equation}
	\rho_\Phi = \frac{1}{d_\inp}\sum_{i,j}\ket{i}\bra{j}\otimes\Phi[\ket{i}\bra{j}].
\end{equation}
The vectorized version of the Choi state then reads $\mathbf{R} = \frac{1}{d_\inp}\sum_{i,j}\kett{\Phi[\ket{i}\bra{j}]}\bra{i}\otimes\bra{j}$.
In this form, the output state can be calculated in the following way: $\kett{\rho_{\rm out}} = d_\inp\mathbf{R} \kett{\rho_{\rm in}}$.
By comparing this result with Eq.~\eqref{eq:Krauss}, we obtain
\begin{equation} \label{eq:Choi_rec}
	{\bf R}=\frac{1}{d_\inp}{\bf A}_\Phi=
	\frac{1}{d_\inp}{\bf K}_{\rm out}\mathbf{S}{\bf K}_{\rm in}^{-1}.
\end{equation}
and $ {\bf S}=d_{\rm in}{\bf K}_{\rm out}^{-1}{\bf R}{\bf K}_{\rm in}$.

It is important to note that $\Phi$ is CPTP map if and only if $\rho_\Phi\geq 0$ and $\tr_{\mathcal{H}_{\rm in}}\rho_\Phi={\bf 1}/d_{\rm in}$.
These requirements allow one to check whether a given pseudostochastic matrix ${\bf S}$ corresponds to any CPTP map.

\section{Dynamics of SIC-POVM probability vectors}\label{sec:dynamics}

Here we study the dynamics of the SIC-POVM probability vector $p$.
For this purpose, 
we analyze (i) the representation for dynamical equation for $p$ corresponding to the unitary evolution of the density matrix governed by the von Neumann equation, 
(ii) the representation for unitary operators as the solution of dynamical equations,
and (iii) the representation for dissipative evolution for $p$ corresponding to Markovian master equation.
We place the summary of results of each following subsection in Table~\ref{tbl:1}.

\begin{table*}[t]
	\begin{tabular}{|p{0.23\linewidth}|p{0.3\linewidth}|p{0.47\linewidth}|}
		\hline
		& Standard representation & Probability representation for SIC-POVM $\{\Pi_i/d\}_{i=1}^{d^2}$\\ \hline
		$d$-dimensional quantum state &
		$\rho$ -- $d\times d$ Hermitian unit-trace semi-positive complex matrix &
		$p=\begin{bmatrix}
		p_1 & \ldots \ p_{d^2}
		\end{bmatrix}^{\rm T}$
		is the real probability vector with nonnegative elements\\
		&
		$\tr\rho=1$, $\rho\geq 0$. &
		$\sum_{i=1}^{d^2}p_i=1$, $p_i\geq 0$.\\
		\hline
		Measurement results probabilities for a POVM
		$\{E_i\}$
		&
		$q_i = \tr(\rho E_i)$
		&
		$q = {\bf M} p$, where
		${\bf M} = (d+1){\bf m}-{\bf E}$,	${\bf E}_{i,j}=\tr E_i$, \newline
		${\bf m}_{i,j}=\tr(E_i\Pi_j)$
		(${\bf M}$ is pseudostochastic matrix).	
		\\ \hline
		Evolution equation defined by a Hamiltonian $H$ &
		${\rm i}\dot{\rho}=[H,\rho]$ \newline ($H$ is Hermitian) &
		$\dot{p}={\bf H} p$, where
		$\mathbf{H}_{i,j} = (d+1)d^{-1}\tr(H[\Pi_j,\Pi_i])$ \newline
		(${\bf H}$ is real antisymmetric).
		\\ \hline
		Solution of the evolution equation &
		$\rho = U(t)\rho^{\rm in}U^\dagger(t)$, where\newline
		$U(t) = T\left\{\exp\left(-{\rm i}\int_{t'=0}^{t}H(t')dt'\right)\right\}$,	
		\newline
		($U(t)$ is unitary).
		&
		$p = {\bf U}(t)p^{\rm in}$, where \newline
		${\bf U}(t) = T\left\{\exp\left(\int_{t'=0}^{t}{\bf H}(t')dt'\right)\right\}$
		\newline
		(${\bf U}(t)$ is unitary and pseudobistochastic). \\
		\hline
		Markovian master equation governed by the GKSL generator
		&
		$\dot{\rho}=-{\rm i} [H,\rho]+$\newline $+ \sum_k \left( V_k \rho V_k^\dagger  - \frac 12 [V_k^\dagger V_k\rho +  \rho V_k^\dagger V_k] \right)$
		&
		$\dot{p}= \mathbf{L}p$, where \newline
		${\bf L}={\bf K}^{-1}\Lambda{\bf K}$,
		$\Lambda = -{\rm i}(C \otimes \mathbf{1} - \mathbf{1} \otimes C^*) + \sum_k V_k \otimes V_k^*$,
		$C = H - \frac{\rm i}{2} \sum_k V_k^\dagger V_k$.		
		\\
		\hline
		Quantum channel defined by Kraus operators $\{A_i\}$ &
		$\rho^{\rm out} =\sum_{i}A_i\rho^{\rm in} A_i^\dagger$
		&
		$p^{\rm out} = {\bf S} p^{\rm in}$, where
		${\bf S}= (d_{\rm in}+1){\bf s}- d_\out^{-1} \tr\left[\Pi^\out_i \sum_k A_k A_k^\dagger\right]$, \newline ${\bf s}_{i,j}=d_{\rm out}^{-1}\sum_k\tr\left(A_k\Pi_i^\inp A_k^\dagger\Pi^\out_j\right)$ \newline
		($\bf S$ is pseudostochastic). \\
		\hline
	\end{tabular}
	\caption{Correspondence between standard and SIC-POVM probability representations.}
	\label{tbl:1}
\end{table*}

\subsection{Stochastic representation of the von Neumann equation}\label{ssec:stochrep}

Consider a standard von Neumann evolution equation
\begin{equation}\label{eq:vNeq}
	{\rm i}\hbar\dot{\rho}=[H,\rho],
\end{equation}
where $[\cdot,\cdot]$ stands for commutator, $i^2=-1$, $\dot\rho$ is the time derivative of the state $\rho$, $H=H^\dagger$ is the system Hamiltonian, and $\hbar$ is the Plank constant.
In the general case, the Hamiltonian $H$ may depend on time $t$.
In what follows, we use dimensionless units and set $\hbar:=1$.

To make a transition from the density matrix $\rho$ to the corresponding SIC-POVM probability vector $p$, we multiply both sides of Eq.~\eqref{eq:vNeq} by $\Pi_i$ and take the trace.
Thus, we obtain
\begin{equation}\label{eq:vNeq2}
	{\rm i}\tr(\dot{\rho}\Pi_i)=\tr([H,\rho]\Pi_i).
\end{equation}
Taking into account Eq.~\eqref{eq:property_of_K}, the left-hand side of Eq.~\eqref{eq:vNeq2} can be rewritten as follows:
\begin{equation} \label{eq:LHS}
	{\rm i}\tr(\dot{\rho}\Pi_i)={\rm i}\sum_j\tr(K_j\Pi_i)\dot{p_j}=
	 i d\dot{p_i}.
\end{equation}
The right-hand side of Eq.~\eqref{eq:vNeq2} can be written in the following way:
\begin{multline} \label{eq:RHS}
	\tr\left([H,\rho]\Pi_i\right)=\sum_j\tr\left([H,((d+1)\Pi_j-\mathbf{1})p_j]\Pi_i\right)\\=\sum_jp_j(d+1)\tr([H,\Pi_j]\Pi_i)\\
	=\sum_jp_j(d+1)\tr(H[\Pi_j,\Pi_i]).
\end{multline}
By combining Eqs.~\eqref{eq:LHS} and ~\eqref{eq:RHS}, we obtain
\begin{equation}\label{eq:probeq}
\dot{p}={\bf H} p,
	\end{equation}
where ${\bf H}$ is a $d^2\times d^2$ matrix with elements given by
$\mathbf{H}_{i,j} = \frac{d+1}{{\rm i}d}\tr(H[\Pi_j,\Pi_i])$.
One can see that the elements of ${\bf H}$ can be also rewritten as
\begin{equation} \label{eq:HbviaH}
	\mathbf{H}_{i,j} = \frac{2(d+1)}{d} {\rm{Im}}\left[\braket{\psi_j}{\psi_i}\bra{\psi_i}H\ket{\psi_j}\right],
\end{equation}
where $\rm{Im}\left[\cdot\right]$ stands for the imaginary part.

Let us also consider an alternative approach to obtaining the form of ${\bf H}$.
We can write the von Neumann equation~\eqref{eq:vNeq} in the vectorized form:
${\rm i} \kett{\dot{\rho}} = (H\otimes {\bf 1}- {\bf 1}\otimes H^*) \kett{\rho}$.
Then by taking into account Eq.~\eqref{eq:statetoprob}, we have
${\rm i} {\bf K}\dot{p} = (H\otimes {\bf 1}- {\bf 1}\otimes H^*) {\bf K}{p}$.
Comparing this result with Eq.~\eqref{eq:probeq}, we obtain the  following representation of the Hamiltonian:
\begin{equation} \label{eq:hbanother}
	{\bf H}=-{\rm i} {\bf K}^{-1} (H\otimes {\bf 1}- {\bf 1}\otimes H^*) {\bf K}.
\end{equation}

The matrix ${\bf H}$ in the form given by Eq.~\eqref{eq:HbviaH} has a number of important properties.
\begin{enumerate}
	\item {\bf H} is real and antisymmetric: $\mathbf{H}_{i,j}=-\mathbf{H}_{ji}$.
	\item Diagonal elements and trace of {\bf H} are zero:
	$\mathbf{H}_{ii}=0$, $\tr\mathbf{H}=0$.	
	\item  Each row and column of {\bf H} summing to 0: $\sum_{i}\mathbf{H}_{i,j}=0$, $\sum_{j}\mathbf{H}_{i,j}=0$
	(here we employed the fact that $\sum_i\ket{\psi_i}\bra{\psi_i}={\bf 1}d$).
\end{enumerate}

The number of independent parameters defining the $d^2\times d^2$ matrix with such properties is equal to $N_{\bf H}=(d^2-1)(d^2-2)/2.$
Meanwhile, the physical properties of the Hamiltonian is defined with $N_{H}=d^2-1$ parameters (the term $-1$ comes from the fact that the Hamiltonian is defined up to a term proportional to the identity matrix).
One can see that $N_{\bf H}>N_{H}$ for $d>2$, so there should be some additional constraints on ${\bf H}$ on the top of the listed properties.

In order to study these constraints, we introduce a set $\{\sigma^{(j)}\}_{j=1}^{d^2-1}$ of orthogonal (with respect to the Hilbert-Schmidt distance) traceless Hermitian matrices in $\mathcal{L}(\mathcal{H})$ (e.g., Pauli matrices for $d=2$) and a renormalized identity matrix $\sigma^{(d^2)}:=\sqrt{2/d}{\bf 1}$ satisfying the relation $\tr(\sigma^{(i)}\sigma^{(j)})=2\delta_{i,j}$ for any $i,j\in\{1,\ldots,d^2\}$.

Then the Hamiltonian can be written in the form
\begin{equation} \label{eq:Hrep}
	H = \sum_{i=1}^{d^2} \lambda_i \sigma^{(i)},
\end{equation}
where $\lambda_i=\tr\left(H\sigma^{(i)}\right)/2$.
We note that all $\{\lambda_i\}_{i=0}^{d^2-1}$ can take arbitrary real values.

Substituting representation~\eqref{eq:Hrep} into Eq.~\eqref{eq:hbanother}, we obtain
\begin{eqnarray}
	{\bf H} &=& \sum_{i=0}^{d^2-1}\lambda_i\left(-{\rm i} {\bf K}^{-1} \left(\sigma^{(i)}\otimes {\bf 1}- {\bf 1}\otimes \sigma^{(i)*}\right) {\bf K}\right) \nonumber \\
	&=& \sum_{i=1}^{d^2-1}\lambda_i{\bf H}^{(i)},
\end{eqnarray}
where ${\bf H}^{(i)} = -{\rm i} {\bf K}^{-1} \left(\sigma^{(i)}\otimes {\bf 1}- {\bf 1}\otimes \sigma^{(i)*}\right) {\bf K}$.
We see that the parameter $\lambda_0$ does not participate in defining ${\bf H}$.

It turns out that the matrices from the set $\{{\bf H}^{(i)}\}_{i=1}^{d^2}$ are orthogonal to each other:
\begin{multline}
	\tr\left({\bf H}^{(i)} {\bf H}^{(j)T}\right)=-
	\tr\left({\bf H}^{(i)} {\bf H}^{(j)}\right)\\
	=\tr\left(\sigma^{(i)}\sigma^{(j)}\otimes {\bf 1} +
	{\bf 1} \otimes  \sigma^{(i)*}\sigma^{(j)*}\right)=4d\delta_{i,j}.	
\end{multline}
Thus, the set $\{{\bf H}^{(i)}\}$ forms a basis of the $(d^2-1)$-dimensional linear subspace of real antisymmetric matrices in the space of all antisymmetric matrices corresponding to physical processes.

We can also introduce a projector $\mathcal{P}_{\rm Ham}$ on this subspace, which acts on arbitrary real $d^2\times d^2$ matrix $\widetilde{\bf H}$ as follows:
\begin{equation} \label{eq:projector}
	\mathcal{P}_{\rm unit} (\widetilde{{\bf H}}) =  \frac{1}{4d} \sum_{i=1}^{d-1}\tr({\bf H}^{(i)T}\widetilde{{\bf H}}) {\bf H}^{(i)}.
\end{equation}
Therefore, a matrix ${\bf H}$ corresponding to a physical Hamiltonian satisfies the relation
\begin{equation}\label{eq:Hamprojcond}
	\mathcal{P}_{\rm unit}({\bf H})={\bf H}.
\end{equation}
This is an important relation for the representation of the SIC-POVM probability vector dynamics.

\subsection{Representation of unitary operators}\label{ssec:repun}

If the matrix ${\bf H}$ does not depend on $t$ (the Hamiltonian $H$ does not depend on time $t$), then the solution of the basic dynamics equation for the SIC-POVM probability vector Eq.~\eqref{eq:probeq} is as follows:
\begin{equation}
	p(t) = {\bf U}(t)p^{\rm in}, \quad {\bf U}(t)= \exp\left({{\bf H}t}\right),
\end{equation}
where $p^{\rm in}$ is a SIC-POVM probability vector at $t=0$.

At the same time, we know that the solution for $\rho$ via standard evolution operator is given by:
\begin{equation}
	\rho(t) = U(t)\rho^{\rm in}U^\dagger(t), \quad U(t)=\exp\left({-{i}Ht}\right),
\end{equation}
where $\rho_{\rm in}$ is initial density matrix (corresponded to $p^{\rm in}$).
Then we can obtain
\begin{equation}
	p_i(t)=\sum_j\left((d+1){\bf u}_{i,j}(t)-\frac{1}{d}\right)p_j^{\rm in},
\end{equation}
where
${\bf u}_{i,j}(t) = \frac{1}{d}\tr[U(t)\Pi_jU^\dagger(t)\Pi_i]$.
It easy to see that ${\bf u}$ is a $d^2 \times d^2$ bistochastic matrix:
\begin{equation}
	\sum_i {\bf u}_{i,j}(t)=\sum_j {\bf u}_{i,j}(t) =1, \quad {\bf u}_{i,j}(t)\geq 0.
\end{equation}

Finally, we arrive at the following expression:
\begin{equation} \label{eq:psm_for_unitary}
	{\bf U}(t) = (d+1){\bf u}(t)-\frac{1}{d}\widetilde{\bf I},
\end{equation}
where $\widetilde{\bf I}$ is the $d^2\times d^2$ matrix with all elements equal to unity (i.e., ${\bf I}_{i,j}=1$).

We can highlight the following basic properties of the operator ${\bf U}(t)$.
\begin{enumerate}
	\item ${\bf U}(t)$ is pseudobistochastic:
	\begin{equation}
	\sum_i {\bf U}_{i,j}(t)=\sum_j {\bf U}_{i,j}(t) =1,
	\end{equation}
	and some its elements of may be negative.

	\item ${\bf U}(t)$ is orthogonal:
	\begin{equation}
		{\bf U}(t){\bf U}^{\rm T}(t)={\bf U}^{\rm T}(t){\bf U}(t)={\mathbf I},
	\end{equation}
	where ${\mathbf I}$ is a $d^2\times d^2$ identity matrix.
\end{enumerate}
We note that according to these two properties an action of ${\bf U}(t)$ preserves both $l_1$ and $l_2$ norms.

Finally, the generalization on time-dependent Hamiltonians can be provided in a straightforward way.
In this case of $t>0$, the evolution operator is given by
\begin{equation}
	{\bf U}(t) = T\left\{\exp\left(\int_{t'=0}^{t}{\bf H}(t')dt'\right)\right\},
\end{equation}
where $T$ is the standard time-ordering operator.
It is easy to see that the mentioned properties for ${\bf U}(t)$ remain true in this case.

We present an example for the construction of ${\bf H}$ and ${\bf U}(t)$ in Appendix~\ref{app:HU}.

\subsection{Stochastic representation of dissipative evolution} \label{ssec:Linblad}

Consider now the Markovian master equation
\begin{equation}\label{eq:origME}
	\dot{\rho}= L(\rho),
\end{equation}
which is governed by the Gorini-Kossakowski-Sudarshan-Lindblad (GKSL) generator \cite{Gorini1976,Linblad1976} of the following form:
\begin{equation}\label{}
	L(\rho) = -{\rm i} [H,\rho] + \sum_k \left( V_k \rho V_k^\dagger  - \frac 12 [V_k^\dagger V_k\rho +  \rho V_k^\dagger V_k] \right),
\end{equation}
where $H$ is Hamiltonian and $V_k$ are so-called noise operators.
By fixing an orthonormal basis $|i\rangle$ in $\mathcal{H}$, one defines the following matrix
\begin{equation}\label{Kij}
	K_{i,j} := {\rm Tr}\left[ P_i L(P_j) \right] ,
\end{equation}
where $P_i = |i\rangle \langle i|$. The matrix $K_{i,j}$ defines a Kolmogorov generator~\cite{Kampen} as follows: $K_{i,j} \geq 0$, $i \neq j$,
and $\sum_i K_{i,j} = 0$, $j=1,\ldots,d$. 
Let us then define the stochastic representation of Eq.~\eqref{eq:origME} using SIC-POMV. 
Equation~(\ref{eq:origME}) gives rise to the following equation for the probability vector $p$:
\begin{equation}\label{eq:ME}
	\dot{p}= \mathbf{L}p ,
\end{equation}
where the $d^2\times d^2$ matrix $\mathbf{L}$ is defined as follows:
\begin{equation}\label{S}
	{\bf L}_{i,j} := (d +1)\frac 1d {\rm Tr}\left[ \Pi_i L(\Pi_j) \right] - \frac{1}{d} {\rm Tr}\left[ \Pi_i L(\mathbf{1}) \right] .
\end{equation}
The matrix ${\bf L}_{i,j}$ satisfies the condition
$\sum_{i=1}^{d^2} {\bf L}_{i,j} = 0$, $j=1,\ldots,d^2$; 
however, condition ${\bf L}_{i,j} \geq 0$ for $i\neq j$ needs not be satisfied.
We may call ${\bf L}_{i,j}$ a {\em pseudo-Kolmogorov generator}.  

Finally, defining the vectorization of $L$
\begin{equation}\label{}
	\Lambda := -{ i}(C \otimes \mathbf{1} - \mathbf{1} \otimes C^*) + \sum_k V_k \otimes V_k^* ,  
\end{equation}
with $C = H - \frac{\rm i}{2} \sum_k V_k^\dagger V_k$, 
one finds  the following representation for the pseudo-Kolmogorov generator:
${\bf L} = {\bf K}^{-1}  \Lambda {\bf K}$.

The solution of the master equation~\eqref{eq:ME} takes the form
\begin{equation}
	p(t) = {\bf S}(t)p^{\rm in},
\end{equation}
where the pseudostochastic matrix ${\bf S}$ is given by ${\bf S}(t) = \exp({\bf L}t)$
for time-independent generator ${\bf L}$ and
\begin{equation}
	{\bf S}(t) = T\left\{\exp\left(\int_{t'=0}^{t}{\bf L}(t')dt'\right)\right\},
\end{equation}
if there is time dependence in the generator.

Finally, we introduce a projector on the set of real matrices corresponded to time-independent Markovian noise generators in Eq.~\eqref{eq:ME}.
For this purpose, we represent the generator ${\bf L}$ in the form ${\bf L} = {\bf H} + {\bf D}$,
where the first term ${\bf H}$ is a Hamiltonian part, defined by Eq.~\eqref{eq:hbanother}, and ${\bf D}$ is a remaining part corresponding to the Markovian noise.
Next, we refer to ${\bf H}$ and ${\bf D}$ as unitary evolution generator and Markovian noise generators correspondingly.

Consider a decomposition of the noise operators in the form $V_k = \sum_{i=1}^{d^2} v_k^{(i)}$,
where $v_k^{(i)}$ are arbitrary complex numbers.
Then the operator ${\bf D}$ takes the form
\begin{equation}
	{\bf D} = \sum_{i,j} \Omega_{i,j} \sum_k v_k^{(i)}{v_k^{(j)*}},
\end{equation}
where
\begin{multline}
	\Omega_{i,j} = {\bf K}^{-1}\left(\sigma^{(i)}\otimes\sigma^{(j)*}- \frac{1}{2}\sigma^{(j)}\sigma^{(i)}\otimes{\bf 1}\right.\\
	\left.-\frac{1}{2}{\bf 1}\otimes\sigma^{(i)*}\sigma^{(j)*}\right){\bf K}.
\end{multline}
Noticing that $\sum_{k}v_k^{(i)}{v_k^{(j)*}}$ can be considered as $(i,j)$th element of some semipositive matrix ${\bf v}$, we can introduce a complex matrix ${\bf V}$ such that
\begin{equation}
	[{\bf V}{\bf V}^{\dagger}]_{i,j}={\bf v}_{i,j}=\sum_{k}v_k^{(i)}{v_k^{(j)*}}
\end{equation}
(hereafter, $[{\bf a}]_{i,j}$ stands for a $(i,j)$th matrix element of a matrix ${\bf a}$).
Although the dimensionality of ${\bf V}$ can be reduced down to $d^2\times {\rm rank}({\bf v})$, we prefer to treat ${\bf V}$ as $d^2\times d^2$ matrix with possibly zero columns.

Using the matrix ${\bf V}$ as a parametrization of the Markovian noise generator ${\bf D}$, we define a function
\begin{equation}
	{\bf D}({\bf V}):=\sum_{i,j}[{\bf V}{\bf V}^{\dagger}]_{i,j}\Omega_{i,j}.
\end{equation}
It allows us to introduce a function (projector) which outputs a matrix of physical time-independent Markovian generator which is closest, in terms of the Frobenius norm, to given $d^2\times d^2$ matrix $\widetilde{\bf D}$:
\begin{equation}
	\mathcal{P}_{\rm Mark}(\widetilde{\bf D}) :={\bf D}\left(\argmin_{{\bf V}\in\mathcal{M}_{d^2}}\tr\left[\left( {\bf D}({\bf V})-\widetilde{\bf D} \right)^2\right]\right),
\end{equation}
where $\mathcal{M}_{d^2}$ is set of complex $d^2\times d^2$ matrices.
We note, that it is practical to consider ${\bf V}$ in the form ${\bf V}={\bf V}_{\rm re}+{\rm i}{\bf V}_{\rm im}$, where ${\bf V}_{\rm re}$ and ${\bf V}_{\rm im}$ are real matrices, and perform an optimization over two spaces of real matrices.

So the necessary and sufficient condition for a generator ${\bf L}$ to correspond to time-independent Markovian evolution is to satisfy a relation
\begin{equation} \label{eq:general_relation_on_L}
	\mathcal{P}_{\rm unit}({\bf L})+\mathcal{P}_{\rm Mark}({\bf L}-\mathcal{P}_{\rm unit}({\bf L})) = {\bf L}
\end{equation}
which is a generalization of the condition~\eqref{eq:Hamprojcond}.

\section{Relation between pseudostochasticity and nonclassicality}
\label{sec:noncl}

Here we apply the developed SIC-POVM probability representation to study nonclassical features of quantum system dynamics.
We consider a general scheme of a quantum-mechanical experiment (see Fig.~\ref{fig:general_experiment}), which has three parts:
(i) preparation of a quantum state controlled by a random classical input $X$ taking values from some finite set $\mathcal{X}$;
(ii) evaluation of this stated under the sequence of quantum channels; and
(iii) measurement of the resulting state, which provides a (generally random) outcome $Y$ from some finite set $\mathcal{Y}$.
Let us consider these steps in more detail and in the framework of the pseudostochastic representation.
We assume that it is possible to construct SIC-POVMs for all involved dimensions of quantum systems.

Let the input $X$ have a probability distribution $p_X(x)$ ($x\in\mathcal{X}$).
We can also introduce a corresponding probability vector $p_X$ with $|\mathcal{X}|$ components given by values $p_X(x)$.
Let for every $x\in \mathcal{X}$ the prepared state in the case $X=x$ is described by a $d_{\rm in}\times d_{\rm in}$ density matrix  $\rho_{x}$. 
Then the preparation of the quantum state can be described with $d_{\rm in}^2\times |\mathcal{X}|$ stochastic matrix ${\bf G}$, each of whose columns is a SIC-POVM vector $p^{{\rm in},x}$ corresponded to the state $\rho_x$.
By multiplying ${\bf G}$ on the vector $p_X$ we obtain a SIC-POVM vector $p^{\rm in}=\sum_{x\in\mathcal{X}}p_X(x)p^{{\rm in},x}$, which corresponds to a prepared state $\sum_{x\in\mathcal{X}}p_X(x)\rho_x$.
So one can see that every quantum state preparation process can be considered as a classical probabilistic process.
However, the opposite statement is not true.
Each of columns of ${\bf G}$ strictly belongs to $d_{\rm in}^2$-dimensional qplex, so the set of possible stochastic matrices corresponded to quantum state preparation is strictly inside the set of all possible $d_{\rm in}^2 \times |\mathcal{X}|$ stochastic matrices.

We then consider an evolution of the prepared state during actions of quantum channels.
As discussed above, the action of each channel can be described by a multiplication on a pseudostochastic matrix on an input SIC-POVM probability vector.
Therefore, the total action of the channels on the quantum states is given by a product of pseudostochastic matrices, which is also pseudostochastic.
Let the output state be described with the $d_{\rm out}\times d_{\rm out}$ density matrix.
Then the action of quantum channels is given by the corresponding $d_{\rm out}^2 \times d_{\rm in}^2$ pseudostochastic matrix ${\bf S}$ and the resulting state is given by a SIC-POVM probability vector $p^{\rm out}={\bf S}p^{\rm in}$. 

Finally, we consider a measurement process whose output is described by a discrete random variable $Y$ taking values from a finite set $\mathcal{Y}$.
Let ${\bf M}$ be a pseudostochastic matrix of the measurement under consideration.
Then  $\mathcal{Y}$-dimensional probability vector $p_Y$ of the random variable $Y$ can be obtained as  $p_Y={\bf M}p^{\rm in}={\bf Q}p_X$, where ${\bf Q}={\bf M}{\bf S}{\bf G}$ is a $|\mathcal{Y}|\times|\mathcal{X}|$ stochastic matrix.
It is easy to see that by varying ${\bf M}$, ${\bf S}$, and ${\bf G}$ it is possible to construct any desired stochastic matrix ${\bf Q}$.

From this analysis, one can note that the non-classicality of the considered process is related to negative conditional probabilities of pseudostochastic maps, which correspond to quantum channels and quantum measurement.
Otherwise, if all matrices, which are involved in our consideration, are stochastic, then we can conclude that the whole process appears to be classical-like stochastic process. 
In this case, obtaining of measurement results in the real quantum experiment can be simulated by a classical sampling technique:
One can first sample a classical random variable $x$ from the distribution $p_X$, then sample a random variable from the distribution $p^{{\rm in},x}$, next 
sample a random variable from the certain column of the stochastic matrix corresponding to the first channel, and so on, until the final measurement.
We note that this technique turns out to be invalid in the case of pseudostochastic matrices with negative elements due to the necessity of sampling from distributions with negative elements.
Below we consider formulate necessary and sufficient for obtaining non-classical behavior of the resulting process.
We would like to note an alternative consideration of non-classical features of quantum system dynamics in the context of SIC-POVM is presented in Ref.~\cite{DeBrota2018}.

\subsection{Characterizing nonclassicality of time-independent Markovian dynamics}

Here we consider a time-independent Markovian dynamics governed by Eq.~\eqref{eq:ME}, where ${\bf L}$ is time independent and satisfies relation~\eqref{eq:general_relation_on_L}.
We raise the question about whether the resulting evolution demonstrates nonclassical features related to negative elements in the pseudostochastic matrix ${\bf S}(t)=e^{{\bf L}t}$ or it is classical-like.
The results are as follows.

\begin{theorem} \label{thm:1}
	Consider a time-independent Markovian master equation governed by the generator ${\bf L}$ given in the SIC-POVM probability representation.
	The resulting map ${\bf S}(t)=e^{{\bf L}t}$ is stochastic for any $t>0$ if and only if all nondiagonal elements of ${\bf L}$ are non-negative: ${\bf L}_{i,j}\geq 0$ for any $i\neq j$.
\end{theorem}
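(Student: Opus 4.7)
The plan is to recognize that $\mathbf{L}$ plays the role of a continuous-time Markov generator (a Q-matrix) in the SIC-POVM probability space, and that the claim is essentially the classical characterization of such generators. The column-sum condition $\sum_{i}\mathbf{L}_{i,j}=0$ has already been established in Sec.~\ref{ssec:Linblad}, and it immediately implies that $\mathbf{S}(t)=e^{\mathbf{L}t}$ is pseudostochastic for every $t$ (the column sums of $\mathbf{S}(t)$ are preserved and equal to $1$). Hence the entire content of the theorem reduces to the statement that the entries of $\mathbf{S}(t)$ remain non-negative for all $t>0$ if and only if $\mathbf{L}_{i,j}\geq 0$ for all $i\neq j$.

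For the necessity direction ($\Rightarrow$), I would use the infinitesimal expansion
\begin{equation*}
\mathbf{S}(t) = \mathbf{I} + t\mathbf{L} + O(t^{2}), \qquad t\to 0^{+},
\end{equation*}
so that for $i\neq j$ one has $\mathbf{L}_{i,j}=\lim_{t\to 0^{+}}\mathbf{S}(t)_{i,j}/t$. If $\mathbf{S}(t)$ is stochastic for all $t>0$, each $\mathbf{S}(t)_{i,j}$ is non-negative, and the limit is therefore non-negative as well.

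For the sufficiency direction ($\Leftarrow$), I would apply the standard shift trick used for Kolmogorov generators. Choose $c\geq \max_{i}(-\mathbf{L}_{i,i})$ and set $\mathbf{A}:=\mathbf{L}+c\mathbf{I}$. By the hypothesis the off-diagonal entries of $\mathbf{A}$ coincide with those of $\mathbf{L}$ and are non-negative, while the diagonal entries are non-negative by the choice of $c$; hence $\mathbf{A}$ is entry-wise non-negative. Since $\mathbf{A}$ and $c\mathbf{I}$ commute,
\begin{equation*}
\mathbf{S}(t)=e^{\mathbf{L}t}=e^{-ct}\,e^{\mathbf{A}t}=e^{-ct}\sum_{k=0}^{\infty}\frac{(\mathbf{A}t)^{k}}{k!},
\end{equation*}
and every term on the right-hand side has non-negative entries because products and sums of entry-wise non-negative matrices stay entry-wise non-negative. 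Combined with the column-sum-one property, this shows that $\mathbf{S}(t)$ is stochastic.

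The argument is essentially a textbook characterization, so I do not expect a genuine obstacle; the only point requiring care is making explicit that the column-sum-zero property of $\mathbf{L}$ (derived in Sec.~\ref{ssec:Linblad} from the normalization of SIC-POVM and the trace-preserving character of the GKSL generator) is what turns non-negativity of entries of $\mathbf{S}(t)$ into full stochasticity, so that the theorem really is only about the off-diagonal sign condition.
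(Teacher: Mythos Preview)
Your proposal is correct. The necessity direction is essentially identical to the paper's: both extract $\mathbf{L}_{i,j}$ for $i\neq j$ from the first-order expansion $\mathbf{S}(t)=\mathbf{I}+t\mathbf{L}+O(t^{2})$ and conclude non-negativity from that of $\mathbf{S}(t)_{i,j}$ (the paper phrases it by contraposition, you take the limit directly).

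For sufficiency the two arguments differ in flavor but are both standard. The paper uses the product formula $e^{\mathbf{L}t}=\lim_{n\to\infty}(\mathbf{I}+\mathbf{L}t/n)^{n}$ and observes that each factor $\mathbf{I}+\mathbf{L}t/n$ is entry-wise non-negative once $n\geq\lceil t\max_{i}|\mathbf{L}_{i,i}|\rceil$, so the limit is non-negative. You instead use the shift trick $\mathbf{A}=\mathbf{L}+c\mathbf{I}$ with $c\geq\max_{i}(-\mathbf{L}_{i,i})$ and the factorization $e^{\mathbf{L}t}=e^{-ct}e^{\mathbf{A}t}$, reading off non-negativity term by term from the exponential series. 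Your route is marginally cleaner in that it avoids passing to a limit of matrix products, while the paper's version makes the connection to the discrete-time stochastic maps $\mathbf{I}+\mathbf{L}t/n$ more visible; either way the content is the classical characterization of Kolmogorov generators.
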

\begin{proof}
	First, we prove that if all nondiagonal elements of ${\bf L}$ are non-negative then ${\bf S}(t)$ is stochastic.
	In order to do so, we represent ${\bf S}(t)$ for some fixed $t$ in the form
	\begin{equation} \label{eq:series1}
		e^{{\bf L}t} = \lim_{n\rightarrow+\infty} \left({\bf I}+{\bf L}\frac{t}{n}\right)^n.
	\end{equation}
	Let $\widetilde{l}:=\max_i|{\bf L}_{i,i}|$.
	Since ${\bf L}_{i,j}\geq 0$ for $i\neq j$, all elements of the matrix ${\bf I}+{\bf L}\frac{t}{n}$ are non-negative for $n\geq \lceil \widetilde{l}t\rceil$.
	So the expansion~\eqref{eq:series1} is product of matrices, whose elements are non-negative, and therefore the resulting matrix also consists of non-negative elements only.
	
	Then, we prove that because ${\bf S}(t)$ is stochastic for all $t>0$, then all non-diagonal elements of ${\bf L}$ are non-negative.
	We prove this statement by contraposition.
	Let ${\bf L}_{i^*,j^*}<0$ for some $i^*,j^*$.
	Consider small values of $t$, and an expansion of ${\bf S}(t)$ in the form
	\begin{equation}
		{\bf S}(t) = {\bf 1}+{\bf L}t+\Delta{\bf L}(t), \quad \Delta{\bf L}(t)_{i,j} \in O(t^2).
	\end{equation}
	Since $\Delta{\bf L}(t)_{i^*j^*}$ is of the second order of smallness, it is possible to find small enough $t>0$ such that ${\bf L}_{i^*j^*}t+\Delta{\bf L}(t)_{i^*j*}<0$ and thus ${\bf S}_{i^*j^*}(t)<0$.
\end{proof}

The above theorem has an important corollary regarding to noiseless unitary processes with ${\bf L}=\mathcal{P}_{\rm unit}({\bf L})={\bf H}$.

\begin{corollary}
	Any non-zero unitary evolution generator ${\bf H}\neq0$, given in the SIC-POVM probability representation, spawns a pseudostochastic matrix ${\bf S}(t)=e^{{\bf H}t}$ with at least one negative element for some $t>0$.
\end{corollary}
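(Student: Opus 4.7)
The plan is to derive the corollary as an immediate consequence of Theorem~\ref{thm:1}, exploiting the structural properties of $\mathbf{H}$ established in Section~\ref{ssec:stochrep}. In particular, recall that $\mathbf{H}$ is real and antisymmetric (property~1 of Section~\ref{ssec:stochrep}), so that $\mathbf{H}_{i,j} = -\mathbf{H}_{j,i}$ for all indices. The key observation is that these sign constraints are incompatible with the off-diagonal non-negativity condition from Theorem~\ref{thm:1} unless $\mathbf{H}$ is identically zero.

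First, I would argue by contraposition. Suppose that $\mathbf{S}(t) = e^{\mathbf{H}t}$ has no negative entries for any $t > 0$, i.e., that $\mathbf{S}(t)$ is a genuine stochastic matrix for every $t > 0$. Applying Theorem~\ref{thm:1} with $\mathbf{L} = \mathbf{H}$ (which is legitimate, since the theorem only requires $\mathbf{L}$ to be a generator in the SIC-POVM representation, and the unitary generator is a special case with vanishing dissipative part), one concludes that $\mathbf{H}_{i,j} \geq 0$ for all $i \neq j$.

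Second, I would combine this non-negativity with the antisymmetry of $\mathbf{H}$. The inequalities $\mathbf{H}_{i,j} \geq 0$ and $\mathbf{H}_{j,i} = -\mathbf{H}_{i,j} \geq 0$ jointly force $\mathbf{H}_{i,j} = 0$ for every pair $i \neq j$. Since antisymmetry also forces the diagonal entries to vanish, one obtains $\mathbf{H} = 0$, contradicting the hypothesis $\mathbf{H} \neq 0$. Hence there must exist indices $i^{*}, j^{*}$ with $\mathbf{H}_{i^{*},j^{*}} < 0$, and the second (contrapositive) part of the proof of Theorem~\ref{thm:1} then supplies a $t > 0$ for which $[\mathbf{S}(t)]_{i^{*},j^{*}} < 0$, completing the argument.

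There is essentially no technical obstacle: the corollary is a one-line consequence of Theorem~\ref{thm:1} once the antisymmetry of $\mathbf{H}$ is invoked. The only point worth checking carefully is that the ``only if'' direction of Theorem~\ref{thm:1} indeed gives the desired conclusion, namely that a single strictly negative off-diagonal entry of the generator suffices to produce a negative entry of $\mathbf{S}(t)$ for some small $t > 0$; this is exactly what the second part of the proof of Theorem~\ref{thm:1} establishes via the expansion $\mathbf{S}(t) = \mathbf{I} + \mathbf{H}t + O(t^{2})$.
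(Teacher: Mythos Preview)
Your proposal is correct and follows precisely the same approach as the paper: the paper's proof simply states that the corollary follows from the antisymmetry $\mathbf{H}_{i,j}=-\mathbf{H}_{j,i}$ together with $\mathbf{H}\neq 0$, which is exactly the mechanism you spell out in detail via Theorem~\ref{thm:1}. Your version is more explicit but logically identical.
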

\begin{proof}
	The statement directly follows from facts that ${\bf H}_{i,j}=-{\bf H}_{ji}$ and ${\bf H}$ is non-zero matrix.
\end{proof}

Thus, as one may expect, the necessary condition for quantum evolution to become classical-like is the presence of decoherence processes, which appear, e.g., due to Markovian noise.
We also note that for some values of $t>0$ the resulting evolution matrix ${\bf S}(t)=e^{{\bf H}t}$ can be stochastic; 
e.g., in the case of two-level system of Rabi oscillations ${\bf S}(t)$ is stochastic and equals to the identity matrix for $t=nT$, where $n$ is a positive integer and $T$ is a period of oscillations.
However, between these discrete moments ${\bf S}(t)$ appears to be pseudostochastic with negative elements.

The result of Theorem~\ref{thm:1} allows us to construct a measure of nonclassicality of time-independent Markovian dynamics determined by a generator ${\bf L}$. 
Let 
\begin{equation}
	\mathcal{N}(\widetilde{{\bf L}})=\max_{i\neq j}\left(\max(0,-\widetilde{{\bf L}}_{i,j}) \right)
\end{equation} 
be a magnitude of the smallest negative non-diagonal element of some matrix $\widetilde{{\bf L}}$.
In line with the results of Theorem~\ref{thm:1}, $\mathcal{N}({\bf L})$ seems to be suitable for the characterization of the nonclassicality of ${\bf L}$.
However, it is not invariant under change of the basis of the underlying $d$-dimensional Hilbert space determined by applying sone unitary operator $U$ to computational basis vectors.
It is easy to see that a transformation $\ket{\psi_i} \rightarrow U\ket{\psi_i}$ of basic SIC-POVM vectors $\{\ket{\psi_i}\}_{i=1}^{d^2}$ yields a transformation ${\bf L}\rightarrow {\bf U}{\bf L} {\bf U}^{\rm T}$, 
where ${\bf U}$ is a pseudostochastic representation of $U$.
Consequently, we can introduce the measure of nonclassicality $\delta_{\rm quant}({\bf L})$ for given generator ${\bf L}$  in the following form:
\begin{equation}\label{eq:nonclas}
	\delta_{\rm quant}({\bf L}) := \max_{{\bf U}\in\mathcal{U}} \mathcal{N}({\bf U}{{\bf L}}{\bf U}^{\rm T}),
\end{equation} 
where the set $\mathcal{U}$ of all possible pseudostochastic matrices corresponding to unitary operators is given by
\begin{equation}
	\mathcal{U}=\{\exp{{\bf H}}:{\bf H}=\sum_{i=1}^{d^2}\lambda_i{\bf H}^{(i)}, \lambda_i\in\mathbb{R}\}.
\end{equation} 

\begin{figure}
	\includegraphics[width=1\linewidth]{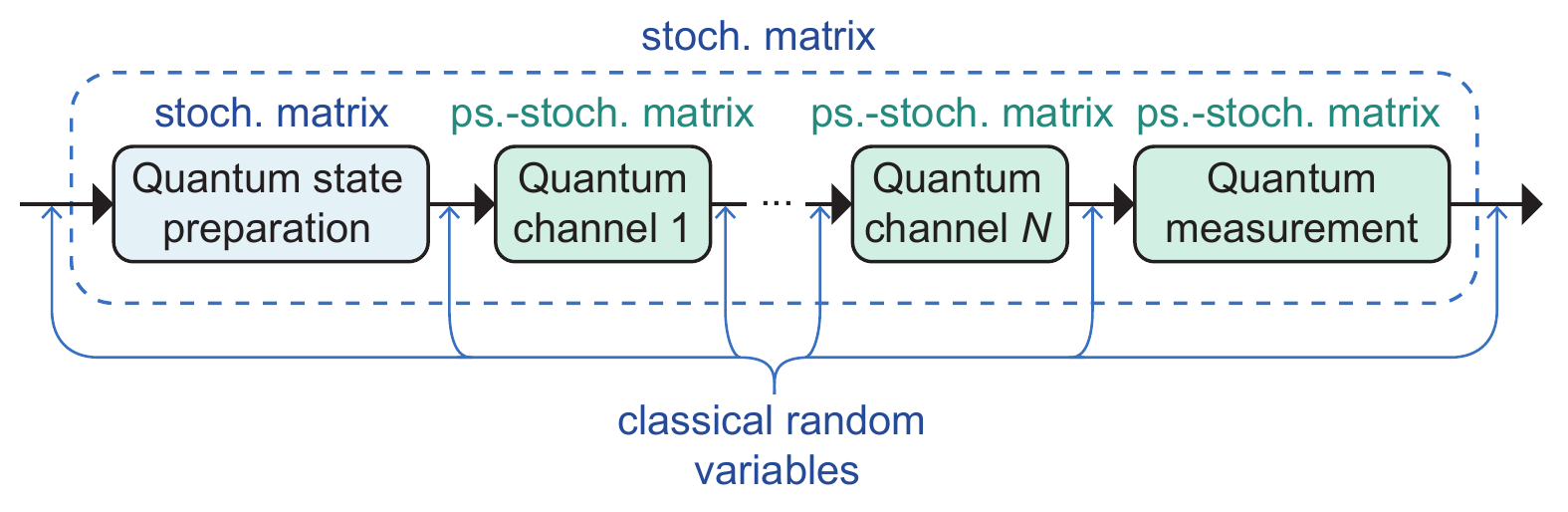}
	\vskip -4mm
	\caption{Scheme of a quantum-mechanical experiment.}
	\label{fig:general_experiment}
\end{figure}

\subsection{Measuring non-Markovianity of the dynamics}

In the above consideration, we stressed on a particular case of time-independent Markovian generator and discussed its nonclassicality.
Here we introduce an additional quantity, which indicates how accurately the considered Markovian type of dynamics can describe the observed dynamics, which is given by a pseudostochastic matrix ${\bf S}$.
In other words, we are considering a measure non-Markovianity of quantum processes.
We note that our approach in general is similar to the one proposed in Ref.~\cite{Wolf2008}.

Consider a projection of ${\bf S}$ on a set of pseudostochastic matrices corresponding to Markovian evolution with time-independent generators:
\begin{equation}
	{\bf S}_{\rm Mark}:=\exp\left[{\mathcal{P}_{\rm unit}({\bf L})+\mathcal{P}_{\rm Mark}({\bf L}-\mathcal{P}_{\rm unit}({\bf L}))}\right],
\end{equation}
where ${\bf L}:=\log{{\bf S}}$.
We can define the non-Markovianity measure $\delta_{\rm nMark}({\bf S})$ as
\begin{equation} \label{eq:nonmark}
	\delta_{\rm nMark}({\bf S}) := \frac{1}{d^2}\sqrt{\mathcal{D}({\bf S},{\bf S}_{\rm Mark})},
\end{equation}
where $\mathcal{D}({\bf S}_{1},{\bf S}_{2})= {\rm Tr}\left[({\bf S}_{1}-{\bf S}_{2})^2\right]$ is the Hilbert-Schmidt distance between matrices ${\bf S}_1$ and ${\bf S}_2$.
The values of $\delta_{\rm nMark}({\bf S})$ have a clear physical meaning: They show an average difference between elements of ${\bf S}$ and corresponding Markovian projection ${\bf S}_{\rm Mark}$.
We note that all elements ${\bf S}_{i,j}$ can be measured experimentally with some  absolute accuracy $\delta$.
So in order to conclude with confidence that ${\bf S}$ exhibit non-Markovianity, one should have $\delta_{\rm nMark}({\bf S})$ larger than $\delta$.

\section{Experimental study of superconducting circuits}
\label{sec:experiment}

In this section, we apply the SIC-POVM probability representation to experimental study of quantum processes realizing during implementation of quantum gates of the IBM QX4 cloud-based superconducting quantum processor~\cite{IBM}.
We consider a qubit dynamics determined by implementation of single-qubit gate, followed by a SIC-POVM measurement.
The SIC-POVM measurement is realized using an additional qubit in the pure state
\begin{equation}
	\ket{\phi}\bra{\phi}= \frac{1}{2}
	\begin{bmatrix}
		1+1/\sqrt{3} & 1/\sqrt{3} - {\rm i}/\sqrt{3} \\
		1/\sqrt{3} + {\rm i}/\sqrt{3} & 1-1/\sqrt{3}
	\end{bmatrix},
\end{equation}
applying controlled-NOT (CNOT) gate followed by the Hadamard gate, and performing standard projective measurements in the computational basis on both qubits [see Fig.~\ref{fig:two_qubit_circuit}(a)].
We note that alternative ways of performing SIC-POVM measurements can also be used~\cite{Oszmaniec2019}.
Let $ab$ be a classical two-bit string produced at each launch of the considered measurement scheme.
It is easy to check that there is one-to-one correspondence between possible outcomes $ab=00, 01, 10, 11$ and SIC-POVM effects $\Pi_1/2$, $\Pi_2/2$, $\Pi_3/2$, and $\Pi_4/2$ defined by projectors
\begin{equation} \label{eq:qubitSICPOVM}
	\begin{aligned}
		& \Pi_1 =\ket{\psi_1}\bra{\psi_1} = \frac{1}{2}\left[\sigma^{(4)}+\frac{1}{\sqrt{3}}(\sigma^{(1)}-\sigma^{(2)}+\sigma^{(3)})\right], \\
		& \Pi_2 =\ket{\psi_2}\bra{\psi_2} = \frac{1}{2}\left[\sigma^{(4)}+\frac{1}{\sqrt{3}}(\sigma^{(1)}+\sigma^{(2)}-\sigma^{(3)})\right], \\
		& \Pi_3 = \ket{\psi_3}\bra{\psi_3} = \frac{1}{2}\left[\sigma^{(4)}+\frac{1}{\sqrt{3}}(-\sigma^{(1)}+\sigma^{(2)}+\sigma^{(3)})\right], \\
		& \Pi_4 = \ket{\psi_4}\bra{\psi_4} = \frac{1}{2}\left[\sigma^{(4)}+\frac{1}{\sqrt{3}}(-\sigma^{(1)}-\sigma^{(2)}-\sigma^{(3)})\right]
	\end{aligned}
\end{equation}
(here $\sigma^{(1)}$, $\sigma^{(2)}$ and $\sigma^{(3)}$ stand for standard $x$-, $y$- and $z$- Pauli matrices respectively and $\sigma^{(4)}$ is a $2\times2$ identity matrix).
Bloch vectors of SIC-POVM projectors and the state $\ket{\phi}\bra{\phi}$ are shown in Fig.~\ref{fig:Bloch_sphere_new}.

\begin{figure}[t]
	\includegraphics[height=0.5\linewidth]{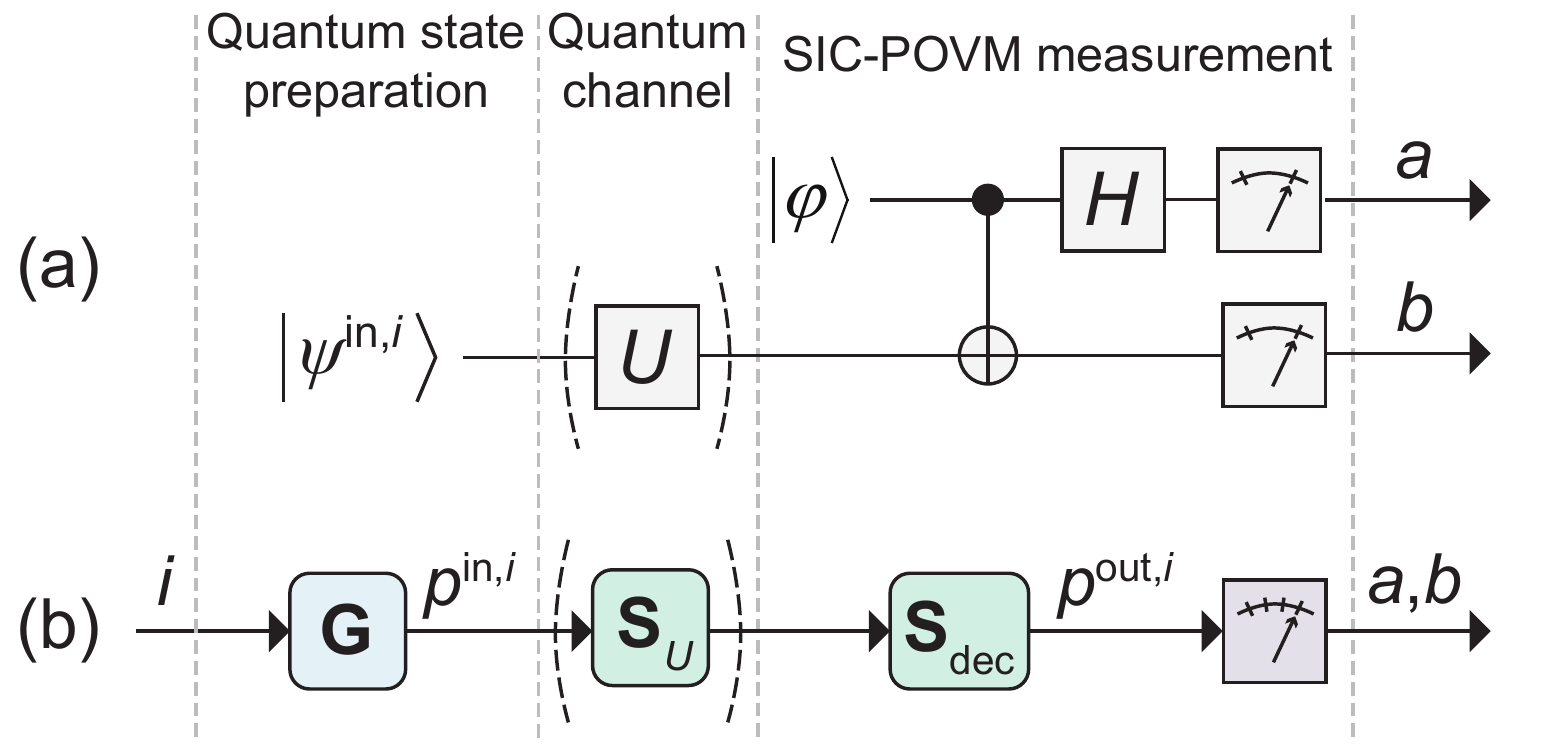}
	\caption{The experimental setup.
	For the tomography of the SIC-POVM measurement operation an applying of $U$ (${\bf S}_U$) is skipped.
	In panel (a), standard representation as a circuit is presented (standard notations for CNOT and Hadamard gates are used).
	In panel (b), SIC-POVM probability representation is shown.}
	\label{fig:two_qubit_circuit}
\end{figure}

\begin{figure}[h]
	\includegraphics[height=0.5\linewidth]{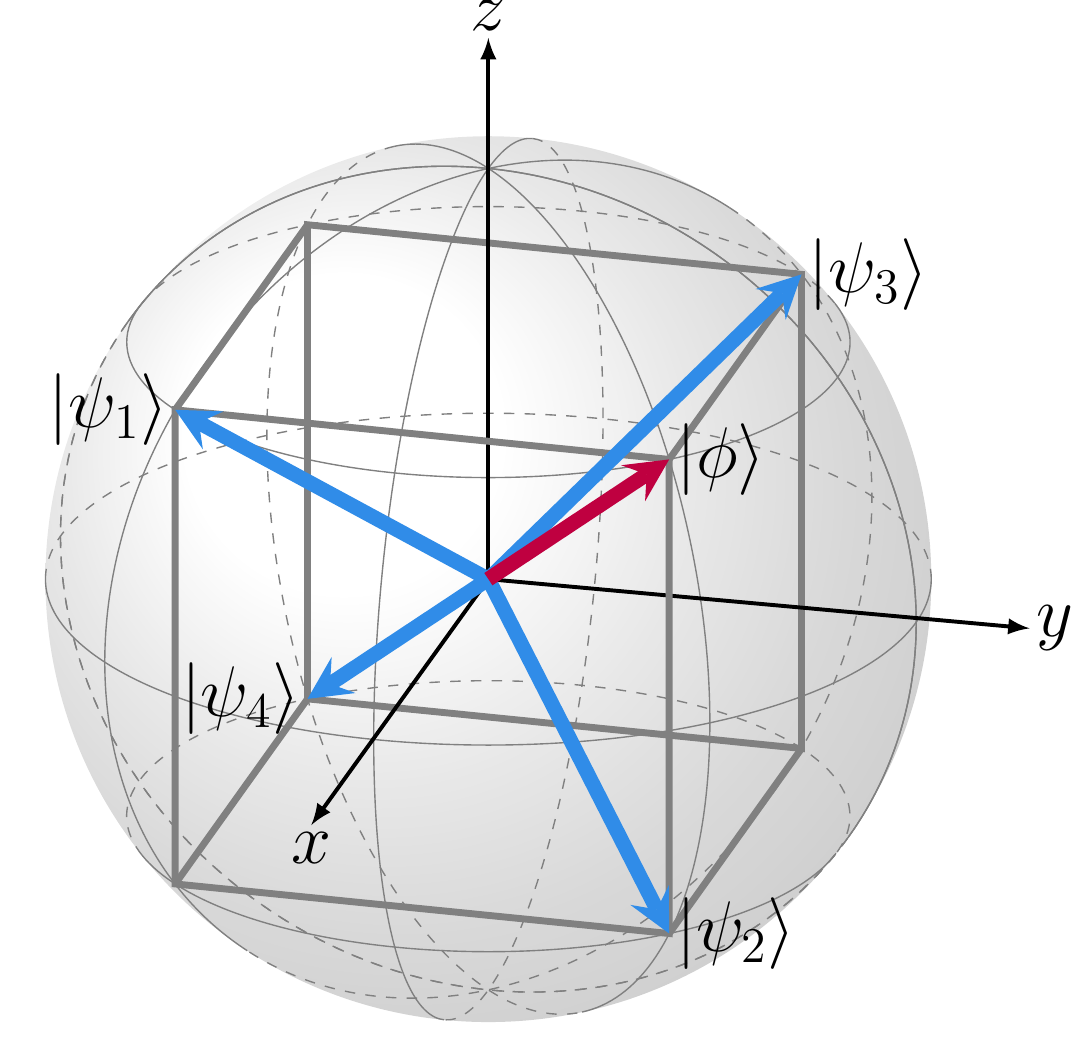}
	\caption{The Bloch vectors corresponded to SIC-POVM projectors $\{\ket{\psi_i}\bra{\ket{\psi_i}}\}_{i=1}^4$ and the state $\ket{\phi}\bra{\phi}$.}
	\label{fig:Bloch_sphere_new}
\end{figure}

Our experiment is designed as the tomographic reconstruction of the quantum channel related to the implementation of a single-qubit gate of the following form:
\begin{equation} \label{eq:U}
	U=\begin{bmatrix}
	1 & 0\\0 & {\rm i}
	\end{bmatrix},
\end{equation}
also known as the $S$ gate.
In order to perform the tomography protocol, 
we apply $U$ to a set of four input states $\{\ket{\psi_i}\}_{i=1}^4$--the same which were used for SIC-POVM construction--and then measure resulting states with the designed SIC-POVM measurement.

The corresponding scheme of the experiment in the SIC-POVM probability representation is presented in Fig.~\ref{fig:two_qubit_circuit}(b).
The quantum state preparation is presented by stochastic matrix 
${\bf G}=\begin{bmatrix}p^{\inp,1} & p^{\inp,2} & p^{\inp,3} & p^{\inp,4}
\end{bmatrix}$, 
where each $p^{\inp,i}$ is a SIC-POVM probability vector, corresponding to the state $\ket{\psi_i}\bra{\psi_i}$, with elements 
\begin{equation} \label{eq:pin_components}
	p^{\inp,i}_j={\rm Tr}(\Pi_i\Pi_j)/2=(2\delta_{i,j}+1)/6.
\end{equation}
We note that for a particular known value of $i$ the probability distribution is given by $p^{\inp,i}$.
Then this probability distribution is multiplied by a pseudostochastic matrix ${\bf S}_U$ corresponded to the gate $U$.
In the case of ideal realization, it would be given by
\begin{equation}
	{\bf S}^{\rm ideal}_U =  \frac{1}{2}
	\begin{bmatrix}
		1 & -1 & 1 & 1 \\
		1 & 1  & -1 & 1 \\
		1 & 1 & 1 & -1 \\
		-1 & 1 & 1 & 1 \\
	\end{bmatrix}.
\end{equation}
Because of experimental imperfections and decoherence, ${\bf S}_U$ has some forms different from ${\bf S}^{\rm ideal}_U$, and, actually, the purpose of the experiment is to find out.
We note that ${\bf S}^{\rm ideal}_U$ possesses negative elements of maximal possible for this dimensionality magnitude equal to $1/d=1/2$ [see Eq.~\eqref{eq:psm_for_unitary}].

Then the multiplication by a pseudostochastic matrix, which corresponds to the SIC-POVM measurement comes.
In the ideal case, it would be described by the $4\times 4$ identity matrix. 
However, the realistic experimental conditions it can also suffer from different kinds of imperfections, especially because of employing two-qubit CNOT gate.
In order to take imperfection into account, 
we model the real SIC-POVM measurement with a sequence of a decoherence channel described by some pseudostochastic matrix ${\bf S}_{\rm dec}$ and ideal SIC-POVM measurement described with the identity matrix.
The measurement outcome $ab$ is obtained by sampling a random variable from the resulting probability distribution $p^{{\rm out},i}={\bf S}_{\rm dec}\,{\bf S}_Up^{{\rm in},i}$.

For each input state, which is given by $i\in\{1,2,3,4\}$, we ran the experiment $N=1024$ times and calculate the output four-dimensional probability vectors $p^{\out,i}$ as frequencies of obtaining corresponding outcomes in SIC-POVM measurement.
Then, by employing known elements of input probability vectors $p^{\inp,i}_j$,
we obtain a complete system of 12 linear equations on the elements of a pseudostochastic matrix  ${\bf S}_{\rm dec}{\bf S}_U$ from general matrix equations
\begin{equation}
	p^{\out,i}=({\bf S}_{\rm dec}\,{\bf S}_U)~p^{\inp,i}, \quad i=1,2,3,4.
\end{equation}
We note that the $d^2\times d^2$ (pseudo)stochastic matrix is characterized by $d^2(d^2-1)$ independent elements due to the normalization requirement for each column.

Let ${\bf S}^{\rm raw}_{{\rm dec}, U}$ be a solution of the the system of linear equations.
The reconstruction error of each element of pseudostochastic matrices can be estimated at a level of $\delta\approx 0.031$ as shown in Appendix~\ref{app:errors}.
However, the resulting matrix is not guaranteed to be physical.
It means that the corresponding Choi matrix may have negative eigenvalues.
In order to obtain a physical result with certainty we introduce an operation of projection on a set of pseudostochastic maps corresponding to CPTP maps.
We derive it in a way similar to the derivation of $\mathcal{P}_{\rm Mark}(\cdot)$.

We consider the Kraus representation of an arbitrary CPTP map presented in Eq.~(\ref{eq:krausopact}).
Taking into account that each Kraus operator can be decomposed in the form $A_k = \sum_i a_k^{(i)} \sigma^{(i)}$ with complex coefficient $a_k^{(i)}$, we obtain a corresponding pseudostochastic matrix in the form
${\bf S} = \sum_{i,j} \sum_k a^{(i)}_k a^{(j)*}_k \Theta_{i,j}$, where 
$\Theta_{i,j}= {\bf K}_{\rm out}^{-1} \sigma^{(i)}\otimes\sigma^{(j)} {\bf K}_{\rm in}$.
Then we can treat $\sum_k a^{(i)}_k a^{(j)*}_k$ as elements of some semipositive matrix ${\bf V}{\bf V}^\dagger$.
This trick allows us to introduce a function
\begin{equation}
	{\bf S}({\bf V})=\sum_{i,j} [{\bf V}{\bf V}^{\dagger}]_{i,j} \Theta_{i,j}
\end{equation}
which parametrizes a CPTP map with complex $d^2 \times d^2$ matrix ${\bf V}$.
Finally, we introduce a projection operator
\begin{equation}
	\mathcal{P}_{\rm CPTP}(\widetilde{\bf S}) ={\bf S}\left(\argmin_{{\bf V}\in\mathcal{M}_{d^2}}\tr\left[\left( {\bf S}({\bf V})-\widetilde{\bf S} \right)^2\right]\right),
\end{equation}
which gives a physical pseudostochastic matrix closes to some matrix ${\widetilde{\bf S}}$.

We employ the constructed projector for obtaining ${\bf S}_{{\rm dec}, U} := \mathcal{P}_{\rm CPTP}({\bf S}^{\rm raw}_{{\rm dec}, U})$.
Then in order to get ${\bf S}_{\rm dec}$ and ${\bf S}_U$ separately, we performed the same experiment without implementation of the gate $U$.
It allows reconstructing ${\bf S}_{\rm dec}$ and obtaining ${\bf S}_U$ with the use of the inverse matrix ${\bf S}_{\rm dec}^{-1}$ and known matrix ${\bf S}_{{\rm dec}, U}$.
We note that the similar trick is used for improving results of quantum state and process tomography~\cite{Bantysh2018,Maciejewski2019}.
Because of the same number of circuit runs $N=1024$ the experimental error of the reconstruction $\delta$ remained the same.

The results of ${\bf S}_{\rm dec}$ and ${\bf S}_U$ reconstruction are presented in the first row of Table~\ref{tab:exp_rslts} and Fig.~\ref{fig:exp_rslts}.
The values of fidelities regard to the identity matrix and $U$ [see \eqref{eq:U}] equal 0.89 and 0.94, respectively.

\begin{figure}
	\includegraphics[width=\linewidth]{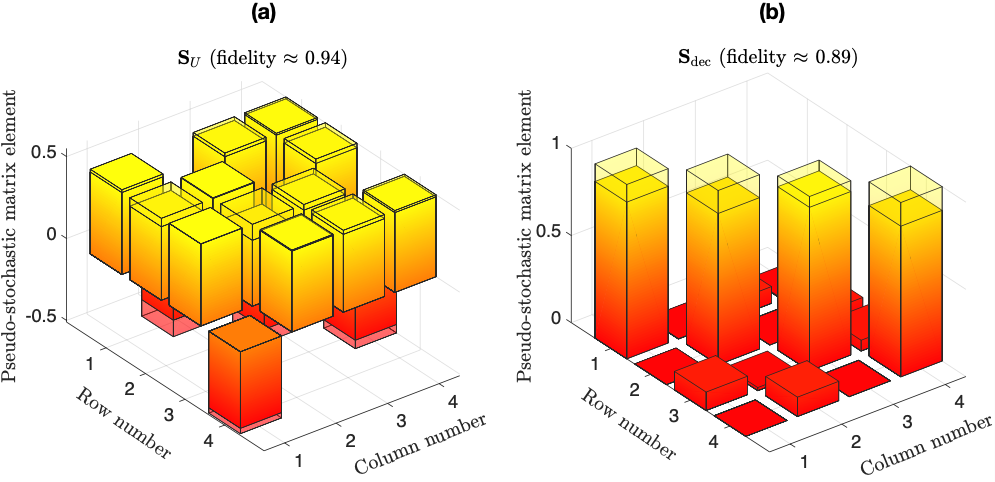}
	\caption{Experimentally reconstructed pseudostochastic matrices of the single-qubit gate $U$ (a) and the imperfect SIC-POVM measurement (b).}
	\label{fig:exp_rslts}
\end{figure}

\begin{table*}
	\centering
	\begin{tabular}{|p{0.24\linewidth}|p{0.38\linewidth}|p{0.38\linewidth}|} \hline
		& Gate implementation & SIC-POVM measurement \\ \hline
		Pseudostochastic matrix &
		${\bf S}_{U}=\begin{bmatrix}
			0.517 & -0.399 &  0.478 &  0.489 \\
			0.449 & 0.504  & -0.418 &  0.473 \\
			0.502 & 0.403  &  0.466 & -0.448 \\ 
			-0.467 & 0.493  &  0.475 &  0.487 
		\end{bmatrix}$ &
		${\bf S}_{\rm dec}=\begin{bmatrix}
			0.893 & 0.002 & 0.100 & 0.010 \\
			0.018 &  0.877 & 0.012 & 0.071 \\
			0.102 &  0.018 & 0.924 & 0.058\\
			0.014 & 0.111 & -0.002 & 0.874	
		\end{bmatrix}$ \\ \hline
		Unitary evolution generator & 
		${\bf H}_{U}=\begin{bmatrix}
		0     & -0.768 &-0.033  &  0.801 \\
		0.768 &  0     &-0.761  & -0.007 \\
		0.033 &  0.761 & 0      &-0.794\\
		-0.801 &  0.007 & 0.794  & 0   
		\end{bmatrix}$ &
		${\bf H}_{\rm dec}=\begin{bmatrix}
		0     &  -0.010 &  0.012 & -0.002 \\
		0.010  &  0      &  0.011 & -0.021 \\
		-0.012 &-0.011   &  0    & 0.023 \\
		0.002 & 0.021   & -0.023 & 0    
		\end{bmatrix}$		
		\\ \hline	
		Markovian evolution generator &
		${\bf D}_{U}=\begin{bmatrix}
		-0.044 &  0.046 &  0.070 &  0.001 \\
		0.013 & -0.105 &  0.005 &  0.030 \\
		0.033 &  0.001 & -0.110 &  0.003 \\
		-0.004 &  0.058 &  0.035 & -0.034 
		\end{bmatrix}$ &
		${\bf D}_{\rm dec} = \begin{bmatrix}
		-0.131 & 0.008 & 0.089 &  0.007\\
		0.007 &-0.136 &-0.001 & 0.098 \\
		0.113 & 0.025 &-0.102 & 0.035 \\
		0.011 & 0.103 & 0.014 & -0.139
		\end{bmatrix}$
		\\ \hline
		Non-Markovianity & 
		$\delta_{\rm nMark}({\bf S}_U)=0.003$ &
		$\delta_{\rm nMark}({\bf S}_{\rm dec})=0.007$ \\ \hline
		Non-classicality &
		$\delta_{\rm quant}({\bf H}_U+{\bf D}_U)=0.781$ &
		$\delta_{\rm quant}({\bf H}_{\rm dec}+{\bf D}_{\rm dec})=0$ \\ \hline
	\end{tabular}
	\caption{Experimental results.}
	\label{tab:exp_rslts}
\end{table*}

We then study a possibility to describe the observed processes with a time-independent Markovian master equation.
For this purpose, we compute $\ln({\bf S}_{\rm dec})$ and $\ln({\bf S}_U)$ and then extract unitary evolution and Markovian evolution parts by employing corresponding projectors.
We note that the unitary term is extracted first (e.g., ${\bf H}_U=\mathcal{P}_{\rm unit}[\ln({\bf S}_{\rm dec})]$), and then $\mathcal{P}_{\rm Mark}$ is applied to the orgthogonal part (e.g. ${\bf D}_{U}=\mathcal{P}_{\rm Mark}[\ln({\bf S}_{\rm dec})-{\bf H}_U]$).
We also computed measures of non-Markovianity given by $\delta_{\rm nMark}({\bf S}_{\rm dec})$ and $\delta_{\rm nMark}({\bf S}_U)$ [see Eq.~\eqref{eq:nonmark}] and measures of non-classicality for Markovian approximations, 
which are given by $\delta_{\rm quant}({\bf H}_U+{\bf D}_U)$ and $\delta_{\rm quant}({\bf H}_{\rm dec}+{\bf D}_{\rm dec})$ [see Eq.~\eqref{eq:nonclas}].

The obtained results are presented in Table~\ref{tab:exp_rslts}.
First of all, we note that the level of non-Markovianity for both processes is quite small compared to measurement error $\delta$.
Thus, the observed processes can be efficiently described with time-independent Markovian equations.
Second, we see that the imperfect SIC-POVM measurement can be described by purely classical stochastic process.
In the chosen basis, it has small nondiagonal negative elements; however, they can be completely removed by shifting to a basis: This fact corresponds to $\delta_{\rm quant}({\bf H}_{\rm dec}+{\bf D}_{\rm dec})=0$.
In contrast, the process of the gate implementation shows a clear nonclassical behavior: There are negative elements in the pseudostochastic matrix which can not be removed by any basis change.
So we see that SIC-POVM probability representation opens interesting possibilities for deep studying of quantum dynamics and revealing its classical and nonclassical features.

\section{Conclusion}\label{sec:conclusion}

In this work, we have considered a probability representation of quantum mechanics based on employing SIC-POVM.
In this representation, $d$-dimensional quantum systems are described with true $d^2$-dimensional probability distributions (probability vectors), while their dynamics and calculating measurement outcomes are governed by pseudostochastic matrices--matrices of conditional probabilities without restriction on elements positivity.

We have derived an equation of a SIC-POVM probability vector evolution corresponding to the von Neumann equation and the GKSL master equation.
In the case of unitary evolution, 
we have shown that the evolution generator in the SIC-POVM probability representation is given by $d^2\times d^2$ real antisymmetric matrix from a certain $(d^2-1)$ linear subspace of all $d^2\times d^2$ real antisymmetric matrices.
This fact has allowed us to construct a projector on the space of physical unitary evolution generators.
Then we have also shown that a corresponding evolution operator is the SIC-POVM probability representation is given by orthogonal pseudobistochastic matrix preserving both $l_1$ and $l_2$ norms.
In the case of the dissipative evolution, we have also shown that is possible to construct a projector on a set of time-independent Markovian dissipators in the SIC-POVM probability representation.

We have applied our results to studying non-classical features of quantum system dynamics.
We have proven the theorem about necessary and sufficient conditions on the evolution generator  which make the resulting evolution described with stochastic (but not pseudostochastic) matrices.
Then we have constructed a practical measures of nonclassicality and non-Markovianity for the observed quantum processes described with pseudostochastic matrices.

Finally, we have applied our approaches to the experimental study of superconducting quantum circuits run on the IBM quantum processor.
We have demonstrated that a noise appearing in the imperfect SIC-POVM measurement demonstrates classical behavior, while quantum dynamics during application of the single-qubit gate has clear nonclassical features.
Meanwhile, we have shown that both processes are well described by the Markovian approximation that is revealed by our non-Markovianity measure.

\section*{Acknowledgments}
	
We acknowledge use of the IBM Q Experience for this work.
The views expressed are those of the authors and do not reflect the official policy or position of IBM or the IBM Q Experience team.
The authors thank A. Ulanov, M. Gavreev, E. Tiunov, and D. Kurlov for fruitful discussions.
Results of Secs.~\ref{sec:general},~\ref{ssec:stochrep},~\ref{ssec:repun}, and ~\ref{sec:noncl} were obtained by E.O.K. and V.I.M with the support from the Russian Science Foundation Grant No. 19-71-10091.
Results of Sec.~\ref{ssec:Linblad} have been obtained by D.C. supported by Narodowe Centrum Nauki under the Grant No. 2018/30/A/ST2/00837.
Results of Sec.~\ref{sec:experiment} were obtained by A.K.F., A.O.M., and A.S.M. with the support from the Grant of the President of the Russian Federation (Project No. MK923.2019.2).
	
\appendix

\section{Relation between SIC-POVM probabilities and MUB probabilities for $d=2$}\label{app:MUB}

Here we provide a correspondence between the considered SIC-POVM probability representation and the alternative probability representation based projective MUB measurements in the case of two-level systems,
which were extensively studied in Refs.~\cite{Chernega2017,Chernega2018, Avanesov2019}.
In this representation, an arbitrary quantum state $\rho$ is characterized with a three-dimensional vector
\begin{equation}
\widetilde{p}:=
\begin{bmatrix}
\widetilde{p}_1\\ \widetilde{p}_2\\ \widetilde{p}_3,
\end{bmatrix},
\end{equation}
where
\begin{multline}
\begin{aligned}
\widetilde{p}_1 := \bra{+}\rho\ket{+},
\quad
\widetilde{p}_2 := \bra{R}\rho\ket{R},
\quad
\widetilde{p}_3 := \bra{0}\rho\ket{0}\\
\ket{+}:=\frac{1}{\sqrt{2}}(\ket{0}+\ket{1}),
\quad
\ket{R}:=\frac{1}{\sqrt{2}}(\ket{0}+{\rm i}\ket{1}).
\end{aligned}
\end{multline}
One can see that the probabilities $\{\widetilde{p}_i\}_{i=1}^{3}$ define projections on $x$, $y$, and $z$ axes of the Bloch sphere.
We note that all three components of $\widetilde{p}$ are independent.

In order to obtain the relation between the vector $\widetilde{p}$ and the SIC-POVM vector $p$ we can employ Eqs.~\eqref{eq:arbitrmeas} and ~\eqref{eq:Mdef}.
Finally, we obtain
\begin{equation}
	\widetilde{p}=
	{\bf F}
	p,
\end{equation}
where
\begin{equation}
	{\bf F}=
	\frac{1}{2}
	\left[\begin{array}{llll}{1+\sqrt{3}} & {1+\sqrt{3}} & {1-\sqrt{3}} & {1-\sqrt{3}} \\ {1-\sqrt{3}} & {1+\sqrt{3}} & {1+\sqrt{3}} & {1-\sqrt{3}} \\ {1+\sqrt{3}} & {1-\sqrt{3}} & {1+\sqrt{3}} & {1-\sqrt{3}}\end{array}\right],
\end{equation}
where we use the qubit SIC-POVM based on the projectors set~\eqref{eq:qubitSICPOVM}.

In order to obtain the opposite relation, we note that
$p_4 = 1-p_1-p_2-p_3$.
Then one can verify the following equality:
\begin{equation}\label{eq:forwardrel}
\begin{bmatrix}
\widetilde{p}_1\\\widetilde{p}_2\\\widetilde{p}_3\\1
\end{bmatrix}=
\frac{1}{2}\left[\begin{array}{cccc}{2 \sqrt{3}} & {2 \sqrt{3}} & {0} & {1-\sqrt{3}} \\ {0} & {2 \sqrt{3}} & {2 \sqrt{3}} & {1-\sqrt{3}} \\ {2 \sqrt{3}} & {0} & {2 \sqrt{3}} & {1-\sqrt{3}} \\ {0} & {0} & {0} & {2}\end{array}\right]
\begin{bmatrix}
p_1\\p_2\\p_3\\1
\end{bmatrix}.
\end{equation}

Using~\eqref{eq:forwardrel}, we obtain the following relation:
\begin{equation}
\begin{bmatrix}
p_1\\p_2\\p_3\\p_4
\end{bmatrix}
=
\frac{\sqrt{3}}{12}
\left[\begin{array}{cccc}{2} & {-2} & {2} & {\sqrt{3}-1} \\ {2} & {2} & {-2} & {\sqrt{3}-1} \\ {-2} & {2} & {2} & {\sqrt{3}-1} \\ {-2} & {-2} & {-2} & {\sqrt{3}(1+\sqrt{3})}\end{array}\right]
\begin{bmatrix}
\widetilde{p}_1\\\widetilde{p}_2\\\widetilde{p}_3\\1
\end{bmatrix}.
\end{equation}
It can be rewritten in a more compact form:
\begin{equation}
p = {\bf T}\widetilde{p}+c,
\end{equation}
where
\begin{equation}
{\bf T}=\frac{\sqrt{3}}{6}\begin{bmatrix}
1 & -1 & 1\\
1 & 1 & -1\\
-1 & 1 & 1\\
-1 & -1 &-1\\
\end{bmatrix},
\quad
c=\frac{1}{12}\begin{bmatrix}
3-\sqrt{3}\\3-\sqrt{3}\\3-\sqrt{3}\\3+3\sqrt{3}
\end{bmatrix}.
\end{equation}

\section{Examples of pseudostochastic matrices of PTP {\bf D}s for $d=2$} \label{app:PTP}

Here we consider two well-known maps in entanglement theory: transposition map $T(X) = X^{\rm T}$, and so-called reduction map $R : \mathcal{L}(\mathcal{H}) \to \mathcal{L}(\mathcal{H})$,
\begin{equation}
	R(X) = \frac{1}{d-1} (\mathbf{1} {\rm Tr}X - X), 
\end{equation}
which is unital PTP, but not CPTP. 
In the case of $d_\inp=d_\out=d=2$ for SIC-POVM effects given by~\eqref{eq:qubitSICPOVM}, one finds the pseudobistochastic matrices corresponding to transposition
\begin{equation}
	{\bf S}_T = \frac 12 \left[ \begin{array}{rrrr} 1 & 1 & 1 & -1 \\ 1 & 1 & -1 & 1 \\1 & -1 & 1 & 1 \\ -1 & 1 & 1 & 1 \end{array} \right],
\end{equation}
and to reduction map
\begin{equation}
	{\bf S}_R = \frac 12 \left[ \begin{array}{rrrr} -1 & 1 & 1 & 1 \\ 1 & -1 & 1 & 1 \\1 & 1 & -1 & 1 \\1 & 1 & 1 & -1 \end{array} \right].
\end{equation}

\section{Construction of pseudostochastic matrix of a unitary evolution for $d=2$} \label{app:HU}

Here we present explicit forms of some basic matrices used for SIC-POVM probability representation in the case $(d=2)$-dimensional systems (qubits) and provide an example of a unitary pseudobistochastic evolution operator construction.
We consider the SIC-POVM based on the projectors given in Eq.~\eqref{eq:qubitSICPOVM}.
The matrix ${\bf K}$ and its inverse, which defines transitions between SIC-POVM probability representation and the standard representation, take the following forms:
\begin{equation}
	{\bf K} = \frac{1}{2}
	\begin{bmatrix}
		1+\sqrt{3} & \quad 1-\sqrt{3} & \quad 1+\sqrt{3}  &\quad 1-\sqrt{3} \\
		\sqrt{3}+{i} \sqrt{3} & \sqrt{3}-{i} \sqrt{3} & -\sqrt{3}-{i} \sqrt{3} &-\sqrt{3}+{i} \sqrt{3}\\
		\sqrt{3}-{i}\sqrt{3} & \sqrt{3}+{i} \sqrt{3} & -\sqrt{3}+{i} \sqrt{3} & -\sqrt{3}-{} \sqrt{3} \\
		1-\sqrt{3} & 1+\sqrt{3} & 1-\sqrt{3} & 1+\sqrt{3}
	\end{bmatrix},
\end{equation}
\begin{multline}
	{\bf K}^{-1} = \\
	\frac{1}{12}
	\begin{bmatrix}
		1+\sqrt{3} & 1-\sqrt{3} & 1+\sqrt{3} & 1-\sqrt{3} \\
		\sqrt{3}+{i} \sqrt{3} & \sqrt{3}-{i} \sqrt{3} & -\sqrt{3}-{ i} \sqrt{3} & -\sqrt{3}+{i} \sqrt{3}\\
		\sqrt{3}-{i} \sqrt{3} & \sqrt{3}+{i} \sqrt{3} & -\sqrt{3}+{i} \sqrt{3} & -\sqrt{3}-{i} \sqrt{3}\\
		1-\sqrt{3} & 1+\sqrt{3} & 1-\sqrt{3} & 1+\sqrt{3}
	\end{bmatrix}.
\end{multline}

The antisymmetric matrices forming a basis for physical process take the forms
\begin{equation}
\begin{aligned}
& {\bf H}^{(1)}=
\left[ \begin{array}{rrrr}{0} & {0} & {1} & {-1} \\ {0} & {0} & {-1} & {1} \\ {-1} & {1} & {0} & {0} \\ {1} & {-1} & {0} & {0}\end{array}\right], \\
& {\bf H}^{(2)}=
\left[ \begin{array}{rrrr}{0} & {-1} & {1} & {0} \\ {1} & {0} & {0} & {-1} \\ {-1} & {0} & {0} & {1} \\ {0} & {1} & {-1} & {0}\end{array}\right], \\
&{\bf H}^{(3)}=
\left[ \begin{array}{rrrr}{0} & {-1} & {0} & {1} \\ {1} & {0} & {-1} & {0} \\ {0} & {1} & {0} & {-1} \\ {-1} & {0} & {1} & {0}\end{array}\right].
\end{aligned}
\end{equation}

Consider a Hamiltonian $H={\sigma_3}/{2}$.
In the probability representations the dynamics is defined by the matrix ${\bf H} = \frac{1}{2}{\bf H}^{(3)}$.
Then the resulting pseudobistochastic operator of the evolution reads
\begin{equation}
\begin{split}
\!\!\!\!\!\!\!\!&{\bf U}(t)=\exp\left({\bf H}t\right)\\
\!\!\!\!\!\!&=\frac{1}{2}
\left[ \begin{array}{cccc}{1+\cos (t)} & {-\sin (t)} & {1-\cos (t)} & {\sin (t)} \\ {\sin (t)} & {1+\cos (t)} & {-\sin (t)} & {1-\cos (t)} \\ {1-\cos (t)} & {\sin (t)} & {1+\cos (t)} & {-\sin (t)} \\ {-\sin (t)} & {1-\cos (t)} & {\sin (t)} & {1+\cos (t)}\end{array}\right],
\end{split}
\end{equation}
which corresponds to the standard evolution operator
\begin{equation}
	U(t)=\exp\left(-{ i}Ht\right)=\begin{bmatrix}
		\exp({ i}t/2) & 0\\
		0 & \exp(-{ i}t/2)
	\end{bmatrix}.
\end{equation}

\section{Estimating statistical errors in the experimental reconstruction of a pseudostochastic matrix} \label{app:errors}

In order to reconstruct an unknown pseudostochastic matrix ${\bf S}$ in our quantum process tomography protocol, we considered a set of equations
\begin{equation}
p^{{\rm out},i}_j=\sum_{k=1}^4 {\bf S}_{jk}p^{{\rm in},i}_k, \quad i,j=1,2,3,4
\end{equation}
where the values of $p^{{\rm in},i}_k$ are given by Eq.~\eqref{eq:pin_components}, and the values of $p^{{\rm out},i}_j$ are obtained from the experiment as frequencies of getting a $j$th SIC-POVM measurement outcome for $i$th input state.

In order to write down a set of linear equations on the components of ${\bf S}$ in a standard form, we can represent ${\bf S}$ as follows:
\begin{equation}
{\bf S}=\begin{bmatrix}
s^{(1)} & s^{(2)} & s^{(3)} & s^{(4)}
\end{bmatrix}^{\rm T},
\end{equation}
where $s^{(i)}$ are four-dimensional column vectors.
We note that due to normalization condition on the columns of ${\bf S}$ we have $s^{(4)}_j=1-s^{(1)}_j-s^{(2)}_j-s^{(3)}_j$, so we need to reconstruct $s^{(1)}$, $s^{(2)}$ and $s^{(3)}$ only.

Then by introducing
\begin{equation}
q^{(j)} := \begin{bmatrix}
p^{{\rm out},1}_j & p^{{\rm out},2}_j & p^{{\rm out},2}_j & p^{{\rm out},4}_j
\end{bmatrix}^{\rm T}
\end{equation}
and
\begin{equation}
{\bf P}:=\begin{bmatrix}
p^{{\rm in},1~{\rm T}}\\
p^{{\rm in},2~{\rm T}} \\
p^{{\rm in},3~{\rm T}} \\
p^{{\rm in},4~{\rm T}} \\	
\end{bmatrix}=
\frac{1}{6}\left[\begin{array}{llll}{3} & {1} & {1} & {1} \\ {1} & {3} & {1} & {1} \\ {1} & {1} & {3} & {1} \\ {1} & {1} & {1} & {3}\end{array}\right],
\end{equation}
we can write the basic system of linear equations on the components of ${\bf S}$ in a compact form
\begin{equation}
q^{(j)} = {\bf P}s^{(j)},\quad j=1,2,3.
\end{equation}
Its solution is then given by
\begin{equation}
s^{(j)}={\bf P}^{-1}q^{(j)}, \quad j=1,2,3,
\end{equation}
where
\begin{equation}
{\bf P}^{-1}=\frac{1}{2}
\left[\begin{array}{rrrr}{5} & {-1} & {-1} & {-1} \\ {-1} & {5} & {-1} & {-1} \\ {-1} & {-1} & {5} & {-1} \\ {-1} & {-1} & {-1} & {5}\end{array}\right].
\end{equation}

In order to estimate statistical errors of the obtained result, we first estimate statistical errors of reconstructed output probabilities.
The mean squared error for each of $q^{(i)}_j=p^{{\rm out},i}_j$ can be obtained in the following way:
\begin{equation}
\left(\delta q^{(i)}_j\right)^2\approx\frac{(1-p^{{\rm out},j}_i)p^{{\rm out},j}_i}{N}\leq \frac{1}{4N},
\end{equation}
where $N=1024$ is number measurements performed for each input state.
Then the resulting statistical error for components of ${\bf S}$ can be estimated as
\begin{equation}
	\delta s^{(j)}_i = \sqrt{ \sum_{k=1}^4 | {\bf P}^{-1}_{ik} | \left(\delta q^{(j)}_k\right)^2  }\leq\frac{1}{\sqrt{N}}=\frac{1}{32}\approx 0.031.
\end{equation}

\end{document}